\documentclass[parindent=0pt]{article}
\usepackage{booktabs}
\usepackage{makecell}
\usepackage{subcaption}

\usepackage[utf8]{inputenc}     
\usepackage[english]{babel}     
\usepackage[a4paper, left=1in, right=1in, top=1.25in, bottom=1.25in]{geometry}
\usepackage{graphicx}           

\usepackage{amsmath,amssymb}    
\usepackage{amsthm}             
\usepackage{mathtools}          
\usepackage{enumitem}           
\usepackage{listings}           
\usepackage{todonotes}          
\usepackage{hyperref}           
\usepackage{float}
\usepackage{amsthm}
\usepackage{amsmath}
\usepackage{amscd}
\usepackage{amssymb}
\usepackage{mathrsfs}
\usepackage{pdfpages}
\usepackage{setspace}
\usepackage{physics}
\usepackage{tcolorbox}

\usepackage{xcolor}
\usepackage{tikz}
\usepackage[toc,page,title,titletoc,header]{appendix}
 \usepackage{url}
 \usepackage{babel} 
\usepackage{csquotes}
\usepackage[f]{esvect}
\usepackage{blindtext}

\newcommand{\R}{\mathbb{R}}

\newcommand{\E}{\mathbb{E}}
\newcommand{\C}{\mathbb{C}}

\newcommand{\g}{\mathfrak{g}}
\newcommand{\h}{\mathfrak{h}}
\newcommand{\s}{\mathfrak{s}}

\newcommand{\diag}{\mbox{diag}}

\let\phi=\varphi
\let\epsilon=\varepsilon

\newtheorem{theorem}{Theorem}[section]
\newtheorem{conjecture}{Conjecture}[section]

\newtheorem{lemma}[theorem]{Lemma}

\newcommand{\adots}{\mathinner{\mkern1mu \raise1pt
\hbox{.} \mkern2mu \raise4pt \hbox{.} \mkern2mu \raise7pt
\vbox{\kern7pt \hbox{.}} \mkern1mu}}
\def\[{[\![}
\def\]{]\!]}

\newlist{regimes}{enumerate}{1}
\setlist[regimes]{label={\textbullet\ Regime (\arabic*):}, leftmargin=*}
\usepackage{authblk} 

\title{%
   Distribution of the Diagonal Entries of the Resolvent of a Complex Ginibre Matrix\\
  }

  \author[1,2]{Pierre Bousseyroux\thanks{Email: pierre.bousseyroux@polytechnique.edu}}
\author[3]{Jean-Philippe Bouchaud}
\author[3]{Marc Potters}


\affil[1]{Chair of Econophysics and Complex Systems, Ecole Polytechnique, 91128 Palaiseau, France}
\affil[2]{LadHyX, UMR CNRS 7646, Ecole Polytechnique, Institut Polytechnique de Paris, 91128 Palaiseau, France}
\affil[3]{Capital Fund Management, Paris, France}

\begin{document}
\maketitle

\begin{abstract}
   The study of eigenvalue distributions in random matrix theory is often conducted by analyzing the resolvent matrix $ \vb{G}_{\vb{M}}^N(z) = (z \vb{1} - \vb{M})^{-1} $. The normalized trace of the resolvent, known as the Stieltjes transform $ \g_{\vb{M}}^N(z) $, converges to a limit $ \g_{\vb{M}}(z) $ as the matrix dimension $ N $ grows, which provides the eigenvalue density $ \rho_{\vb{M}} $ in the large-$ N $ limit. In the Hermitian case, the distribution of $ \g_{\vb{M}}^N(z) $, now regarded as a random variable, is explicitly known when $ z $ lies within the limiting spectrum, and it coincides with the distribution of any diagonal entry of $\vb{G}_{\vb{M}}^N(z)$.

In this paper, we investigate what becomes of these results when $ \vb{M} $ is non-Hermitian. Our main result is the exact computation of the diagonal elements of $\vb{G}_{\vb{M}}^N(z)$ when $ \vb{M} $ is a Ginibre matrix of size $ N $, as well as the high-dimensional limit for different regimes of $ z $, revealing a tail behavior connected to the statistics of the left and right eigenvectors. Interestingly, the limit distribution is stable under inversion, a property previously observed in the symmetric case. We then propose two general conjectures regarding the distribution of the diagonal elements of the resolvent and its normalized trace in the non-Hermitian case, both of which reveal a symmetry under inversion.

\end{abstract}

\section{Introduction}

The standard way to compute the eigenvalue distribution of  a random matrix $\vb{M}$ is to study the resolvant matrix of $\vb{M}$ defined as 
\begin{equation}
    \vb{G}_{\vb{M}}^N(z) = (z\vb{1} - \vb{M})^{-1}
\end{equation} where $z$ is a complex variable defined away from all the eigenvalues of $\vb{M}$ and $\vb{1}$ denotes the identity matrix. Then, the Stieltjes transform of $\vb{M}$ is given by the normalized trace of the resolvent
\begin{equation}
    \g_{\vb{M}}^N(z)= \frac{1}{N}\Tr\left(\vb{G}^N_{\vb{M}}(z)\right) = \frac{1}{N}\sum_{k=1}^N \frac{1}{z - \lambda_k}
\end{equation} where $\lambda_k$ are the eigenvalues of $\vb{M}$. 
One may refer to \cite{bordenave2012around} for a more careful approach to the next assertions and formulas. If the random matrix is correctly normalized, in many instances, the eigenvalue spectrum of large random matrices is confined in a compact set. We denote $\rho_{\vb{M}}$ the limit continuous distribution of the eigenvalues of $\vb{M}$ when $N$ gets large. Then, $\g_{\vb{M}}^N(z)$ often admits a limit denoted $\g_{\vb{M}}$ when $N$ goes to infinity with $z$ outside the continuous spectrum of $\vb{M}$ and does not depend anymore of a single realization of $\vb{M}$:
\begin{equation}
    \g_{\vb{M}}^N(z) \underset{N\to \infty}{\longrightarrow} \int \frac{\rho_{\vb{M}}(\lambda)}{z - \lambda}d\lambda
\end{equation}when $\vb{M}$ is Hermitian and
\begin{equation}
    \int \int \frac{\rho_{\vb{M}}(\lambda)}{z - \lambda} d\Re(\lambda)\wedge d\Im(\lambda)
\end{equation}otherwise. Unlike in the Hermitian case, when $\vb{M}$ is non-Hermitian, $\g_{\vb{M}}(z)$ can be defined even when $z$ lies within the spectrum.

For a Hermitian matrix $\vb{M}$, the computation of $\g_{\vb{M}}$ enables us to determine $\rho_{\vb{M}}$ via the Sokhotski-Plemelj formula:
\begin{equation}
    \underset{\eta \to 0^{+}}{\text{lim}}\Im \g_{\vb{M}}(x - i \eta) = \pi \rho_{\vb{M}}(x)
\end{equation}while when $\vb{M}$ is not Hermitian, one can recover $\rho_{\vb{M}}$ using
\begin{equation}
    \partial_{\overline{z}} \g_{\vb{M}} = \pi \rho_{\vb{M}}.
\end{equation}
For instance, if $\vb{M}$ is a complex Ginibre matrix, it is known that\footnote{We recall that a complex Ginibre ensemble is an $N\times N$ non-Hermitian random matrix over $\C$ with i.i.d. complex Gaussian entries normalized to have mean zero and variance $\frac{1}{N}$.}
\begin{equation}\label{eq:ginibre}
    \g(z) = \left\{
    \begin{array}{cc}
        \overline{z} & \mbox{when} |z|\leq 1 \\
        \frac{1}{z} & \mbox{otherwise.}
    \end{array}
    \right.
\end{equation}and so, $\rho_{\vb{M}}(z) = \frac{1}{\pi}$ for $|z|\leq 1$ and $0$ otherwise. The distribution of complex eigenvalues of $\vb{M}$ is uniform in the unit disk. 

A different way to recover $\rho_{\vb{M}}(z)$ is to consider $\g_{\vb{M}}^{N}(z)$ as a random variable $g$ and study the tail behaviour of the distribution of $g$ when $|g|\to\infty$. Indeed, the tail amplitude of this variable is directly governed by the presence of eigenvalues around $z$. This idea was proposed used in \cite{cizeau1994theory} to compute the eigenvalue distribution of L\'evy matrices. When $\vb{M}$ is a Hermitian matrix and $N$ becomes large, the tail of the distribution of $g$ should behave like ${\rho_{\vb{M}}(x)/}{g^2}$ for $z=x$ real.
In fact, in the case where $\vb{M}$ is an $N\times N$ Wigner matrix, i.e. with independent entries having a mean $0$ and variance $\sigma^2/N$, we even know the full limiting distribution of $\g_{\vb{M}}^N$. It follows a Cauchy distribution with a density function given by \begin{equation}\label{dis} \frac{\rho_{\vb{M}}(x)}{\left(g - \h_{\vb{M}}(x)\right)^2 + \pi^2 \rho_{\vb{M}}(x)^2}, \end{equation} where $\h_{\vb{M}}(x)$ is the Hilbert transform of $\rho_{\vb{M}}$, which simplifies to $\frac{x}{2\sigma^2}$ in the Wigner case. This result was proven by Fyodorov and collaborators for the GOE and GUE ensembles using Random Matrix Theory techniques (\cite{fyodorov1997statistics}, \cite{fyodorov2007replica}, and \cite{fyodorov2004statistics}). Interestingly, this result is super-universal, in the sense that it holds for a much broader class of point processes on the real axis. For a physicist's perspective, see \cite{aizenman2015ubiquity} and \cite{bouchaud2018two}. Note that \eqref{dis} asymptotically behaves as ${\rho_{\vb{M}}(x)}/{g^2}$ for large $|g|$ as expected.
Interestingly, a computation shows that the distribution \eqref{dis} is stable under inversion. Furthermore, one can infer from \cite{aizenman2015ubiquity} that
$\left[\vb{G}_{\vb{M}}^{N}\right]_{11}$ and $\g_{\vb{M}}^{N}$ have the same limiting distribution.

We then ask what becomes of the distributions of $\left[\vb{G}_{\vb{M}}^{N}\right]_{11}$ or $\g_{\vb{M}}^{N}$ when $\vb{M}$ is no longer Hermitian. Do we still observe this inversion symmetry? What can we say about the tails?

At this stage, we are unable to compute explicitly the limiting distribution $\g_{\vb{M}}^N$ for general non-Hermitian matrices or even for Ginibre matrices to find the analogue of the Cauchy distribution in Eq. \eqref{dis} for Wigner matrices. We have only conjectures, which we will detail in Section \ref{conjectures} below.

However, we succeeded in calculating the distribution of a diagonal element of the resolvent, namely $\left[\vb{G}_{\vb{M}}^{N}(z)\right]_{11}$, in the case when $\vb{M}$ is a complex Ginibre matrix. We recall that The Lebesgue density of a $N\times N$ complex Ginibre matrix called $\vb{M}$ in $\mathcal{M}_n(\C) = \C^{N\times N}$ is 
\begin{equation}
    \vb{A}\mapsto \frac{1}{\pi^{N^2}}\exp{-N\Tr(\vb{A}^*\vb{A})} = \frac{1}{\pi^{N^2}}\exp{-N\sum_{i, j=1}^N |\vb{A}_{i, j}|^2} 
\end{equation}
where $\vb{A}^*$ denotes the conjugate-transpose of $\vb{A}$.

The paper is organized as follows. First, we present our main result: the theoretical distribution of $\left[\vb{G}_{\vb{M}}^{N}(z)\right]_{11}$ for finite $N$ when $\vb{M}$ is a Ginibre matrix. We then let $N$ tend to infinity, examining cases where $z$ lies inside the unit disk, outside it, or approaches the unit circle under various scaling regimes in $N$, accompanied by numerical results. Next, we show that, in the high-dimensional limit, the tail of the diagonal elements of the resolvent no longer reflects the spectral distribution but instead depends on the statistics of the left and right eigenvectors through the so-called self-overlaps. We then present two conjectures for the case where $\vb{M}$ is no longer Ginibre: one concerning the limiting distribution of the diagonal elements of the resolvent, and the other concerning the distribution of  $\g_{\vb{M}}^N(z)$. The remainder of the paper is dedicated to the proof of our main result.

\section{The distribution of the resolvent diagonal elements}

The following theorem is our main result, which provides the explicit calculation of the distribution of $\left[\vb{G}_{\vb{M}}^{N}\right]_{11}$ for finite $N$.
\begin{theorem}\label{theorem1}
    Let $\vb{M}$ be a $N\times N$ complex Ginibre matrix with $N\geq 2$. We define 
    \begin{equation}
        [\vb{G}_{\vb{M}}^N(z)]_{11} := [(z\vb{1} - \vb{M})^{-1}]_{1 1}
    \end{equation}for $z$ outside the spectrum of $\vb{M}$.
    Then, we have
    \begin{equation}\label{not}
        \frac{1}{[\vb{G}_{\vb{M}}^N(z)]_{11}} - z \sim \mathcal{N}_{\C}\left(0, \frac{1+t}{N}\right)
    \end{equation}
    where $t$ admits a probability density function given by

    \begin{multline}\label{sigma}
    t\in \R^{+}\mapsto \frac{e^{-\frac{N|z|^2}{\left(1 + \frac{1}{t}\right)}}}{(N-2)!} \frac{t^{N-2}}{(1 + t)^N}
    \left(\Gamma(N-1, N|z|^2)e^{N|z|^2}\left(N-1 - \frac{tN}{1+t}|z|^2\right)  +\left(N|z|^2\right)^{N-1}\right)
\end{multline}

where $\Gamma$ is the incomplete Gamma function defined as 
\begin{equation}
    \Gamma(s, x) = \int_x^{+\infty} u^{s-1}e^{-u}du.
\end{equation}
\end{theorem}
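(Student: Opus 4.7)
The plan is to apply a Schur-complement reduction to express $1/[\vb{G}_{\vb{M}}^N(z)]_{11} - z$ as a conditionally centered complex Gaussian, and then compute the density of its random variance via unitary invariance and two matrix integrals on a shifted Ginibre block.

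Write $\vb{M} = \begin{pmatrix} m_{11} & \vb{b}^{T} \\ \vb{c} & \vb{D}' \end{pmatrix}$; Schur's formula yields
$$\frac{1}{[\vb{G}_{\vb{M}}^{N}(z)]_{11}} - z \;=\; -\,m_{11} - \vb{b}^{T}(z\vb{1}_{N-1} - \vb{D}')^{-1}\vb{c}.$$
By definition of a complex Ginibre matrix, the blocks $m_{11}$, $\vb{b}$, $\vb{c}$, $\vb{D}'$ are jointly independent, with $m_{11}$ and the entries of $\vb{b}$, $\vb{c}$ i.i.d.\ $\mathcal{N}_{\C}(0,1/N)$ and $\vb{D}'$ itself an $(N-1)\times(N-1)$ Ginibre matrix. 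Conditioning on $(\vb{D}',\vb{c})$, the right-hand side is a centered complex Gaussian with variance $(1+t)/N$, where $t := \|(z\vb{1}-\vb{D}')^{-1}\vb{c}\|^{2}$; this proves \eqref{not} and reduces the theorem to the identification of the law of $t$.

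To find that law, change variables from $\vb{c}$ to $\vb{y} := (z\vb{1}-\vb{D}')^{-1}\vb{c}$ (Jacobian $|\det(z\vb{1}-\vb{D}')|^{2}$) and set $\vb{A} := z\vb{1} - \vb{D}'$, so that $(\vb{A},\vb{y})$ has joint density proportional to $|\det\vb{A}|^{2}\,e^{-N\|\vb{A}-z\vb{1}\|_{F}^{2} - N\|\vb{A}\vb{y}\|^{2}}$. The map $(\vb{A},\vb{y})\mapsto(\vb{U}\vb{A}\vb{U}^{*},\vb{U}\vb{y})$ preserves this density for every unitary $\vb{U}$ (by biunitary invariance of the Ginibre measure), so the marginal of $\vb{y}$ depends only on $\|\vb{y}\|=\sqrt t$. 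Fixing $\vb{y}=\sqrt t\,\vb{e}_1$ and passing to polar coordinates gives an overall factor $t^{N-2}$ and reduces the computation to a matrix integral in $\vb{A}$ alone. Splitting $\vb{A}=[\vb{a}_1\mid\vb{A}_2]$, using the Gram-matrix identity $|\det\vb{A}|^{2}=\|P_{\perp}\vb{a}_{1}\|^{2}\,\det(\vb{A}_{2}^{*}\vb{A}_{2})$ (with $P_{\perp} = \vb{n}\vb{n}^*$ the orthogonal projection onto $\mathrm{col}(\vb{A}_2)^{\perp}$), and doing the Gaussian integration in $\vb{a}_1$, one obtains the factor $e^{-Nt|z|^{2}/(1+t)}/(1+t)^{N}$ multiplied by an affine expression in $|\vb{n}_1(\vb{A}_2)|^2$. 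Expanding $\det\vb{A}$ along its first column and matching with the Gram identity yields the key algebraic simplification
$$\det(\vb{A}_{2}^{*}\vb{A}_{2})\,|\vb{n}_{1}(\vb{A}_{2})|^{2} \;=\; |\det \vb{B}|^{2},$$
where $\vb{B}$ is the lower-right $(N-2)\times(N-2)$ block of $\vb{A}$.

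After an elementary Gaussian integration over the first row of $\vb{A}_2$, the density of $t$ is determined by two moments on the shifted Ginibre block $\vb{B} = z\vb{1}_{N-2} + \vb{H}/\sqrt N$:
$$K_1 := \mathbb{E}|\det\vb{B}|^{2}, \qquad K_2 := \mathbb{E}\!\left[|\det\vb{B}|^{2}\,\TR\!\bigl((\vb{B}^{*}\vb{B})^{-1}\bigr)\right] = \mathbb{E}\|\mathrm{adj}(\vb{B})\|_{F}^{2}.$$
The computation of $K_2$ is the main technical obstacle. The moment $K_1$ follows from the classical Ginibre identity $\mathbb{E}|\det(w\vb{1}_n-\vb{G})|^{2}=n!\,S_n(|w|^{2})$ with $S_n(x)=\sum_{k=0}^n x^k/k!$, which via $\Gamma(N-1,x)e^{x}=(N-2)!\,S_{N-2}(x)$ delivers the incomplete Gamma factor in \eqref{sigma}. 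For $K_2$, the plan is to write $\|\mathrm{adj}(\vb{B})\|_{F}^{2}=\sum_{i,j}|M_{ij}(\vb{B})|^{2}$ and exploit the invariance of the law of $\vb{B}$ under joint row/column permutations to treat diagonal ($i=j$) and off-diagonal ($i\neq j$) minors separately; a further Schur expansion inside each off-diagonal minor then yields a recursion relating the full moment in dimension $n$ to itself in dimension $n-2$, whose closed-form solution (checked by induction) is
$$\mathbb{E}\,\|\mathrm{adj}(w\vb{1}_n+\vb{H})\|_{F}^{2} \;=\; n!\,S_n(|w|^{2})\bigl(n-|w|^{2}\bigr) + |w|^{2(n+1)}.$$
The $|w|^{2(n+1)}$ term supplies the $(N|z|^{2})^{N-1}$ contribution in \eqref{sigma}. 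Assembling everything, the explicit $t$-dependence outside the prefactor $t^{N-2}\,e^{-Nt|z|^2/(1+t)}/(1+t)^{N}$ cancels miraculously, reconstituting exactly the density \eqref{sigma}.
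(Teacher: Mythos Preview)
Your argument is correct and takes a genuinely different route from the paper. Both proofs begin with the same Schur reduction, but they diverge immediately afterwards. The paper stays on the characteristic-function side: it integrates out the Gaussian column to obtain
\[
\phi_{\vb{A}}^N(\omega)=\E\!\left[\frac{\det(\vb{B}\vb{B}^{*})}{\det\!\left(\vb{B}\vb{B}^{*}+\frac{|\omega|^{2}}{4N^{2}}\right)}\right],
\]
recognises this as the quantity $\mathcal{D}_{N-1}^{(1)}(\sqrt{N}z,|\omega|^{2}/4N)$ of Grela--Guhr/Fyodorov, and then imports Fyodorov's closed-form integral representation wholesale, reading off the density of $t$ from its $e^{-pt}$ kernel. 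Your approach instead attacks the density of $t$ directly: the change of variables $\vb{c}\mapsto\vb{y}$, conjugation invariance, the base--height identity $|\det\vb{A}|^{2}=\|P_{\perp}\vb{a}_{1}\|^{2}\det(\vb{A}_{2}^{*}\vb{A}_{2})$, and the key observation $\det(\vb{A}_{2}^{*}\vb{A}_{2})|\vb{n}_{1}|^{2}=|\det\vb{B}|^{2}$ reduce everything to the two shifted-Ginibre moments $K_{1}$ and $K_{2}=\E\|\mathrm{adj}(\vb{B})\|_{F}^{2}$. The closed form you state for $K_{2}$ is correct (and easy to verify for small $n$), so your route is fully self-contained and in effect re-derives the content of Fyodorov's formula by elementary means; the price is that the recursion/induction for $K_{2}$ is the real work and is only sketched. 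Two small inaccuracies worth fixing: the invariance you use is conjugation $(\vb{A},\vb{y})\mapsto(\vb{U}\vb{A}\vb{U}^{*},\vb{U}\vb{y})$, not full biunitary invariance (the shift by $z\vb{1}$ breaks the latter); and the $t$-dependence does not literally ``cancel'' outside the prefactor---the $\tfrac{|z|^{2}}{1+t}$ from the $\vb{a}_{1}$-integral combines with the $-|z|^{2}$ coming out of $K_{2}$ to produce precisely the $-\tfrac{tN}{1+t}|z|^{2}$ inside the bracket of \eqref{sigma}, so it simplifies rather than vanishes.
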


The proof relies on the cavity method, which enables us to express $1/[\vb{G}_{\vb{M}}^N(z)]_{11} - z$ as a mixture of Gaussians, where the distribution of its variance is dictated by $t$. While the distribution of $t$ in the theorem may seem complex at first sight, it can be derived through a straightforward computation using characteristic functions.

The objective of the next theorem is to analyze the limit of \eqref{sigma} as $N \to \infty$ under different regimes of $z$. It turns out that when $|z| < 1$, the distribution of $t$ converges to an inverse gamma law. Consequently, in the limit $N \to \infty$, the distribution of $[\vb{G}_{\vb{M}}^N(z)]_{11}$ becomes a complex Student variable. This result constitutes the first point of the following theorem.

The detailed computational steps, which provide the proofs of the preceding and subsequent results, are included in the appendix.

\begin{theorem}\label{main}
    Let $\vb{M}$ be a $N\times N$ complex Ginibre matrix. We define 
    \begin{equation}
        [\vb{G}_{\vb{M}}^N(z)]_{11} := [(z\vb{1} - \vb{M})^{-1}]_{1 1}
    \end{equation}
    for $z$ outside the spectrum of $\vb{M}$. We then have several limiting results for different regimes.
    \begin{regimes}
        \item If $|z|<1$ fixed, then the distribution of $[\vb{G}_{\vb{M}}^N(z)]_{11}$   
    converges in law when $N\to \infty$ towards a complex Student law with $\nu = 2$ with a density function that is
    \begin{equation}\label{student}
        \omega\in \C\mapsto \frac{1}{\pi}\frac{1 - |z|^2}{(1 - |z|^2 + |\omega - \overline{z}|)^2}.
    \end{equation}
     More generally, if $|z|^2 = 1 - g(N)$ with $g(N)\geq 0$ such that $\frac{1}{N}\ll g(N)$, then 
    \begin{equation}\label{regime2}
        \frac{1}{\sqrt{g(N)}}\left([\vb{G}_{\vb{M}}^N(z)]_{11} - \overline{z}\right) \xrightarrow{\mathcal{L}} \Omega
    \end{equation}where $\Omega$ is a complex Student variable with $\nu=2$ whose density function equal to 
    \begin{equation}\label{stu}
        \omega\in \C\mapsto \frac{1}{\pi}\frac{1}{(1 + |\omega|^2)^2}.
    \end{equation}
     \item If $|z|^2 = 1 + \epsilon(N)$ with $\epsilon(N) = o\left(\frac{1}{\sqrt{N}}\right)$, then 
    \begin{equation}\label{regime3}
        N^{1/4}\left([\vb{G}_{\vb{M}}^N(z)]_{11} -1\right) \xrightarrow{\mathcal{L}} \mathcal{N}_{\C}(0, t)
    \end{equation}where $t$ admits a probability density function given by
    \begin{equation}
        t\in \R^{+}\mapsto \frac{1}{t^2}e^{-\frac{1}{2t^2}}\left(\frac{1}{2t} + \frac{1}{\sqrt{2\pi}}\right).
    \end{equation}
    \item If $|z|^2 = 1 + \frac{\alpha}{\sqrt{N}}$ with $\alpha\in \R$, then 
    \begin{equation}\label{regime4}
        N^{1/4}\left([\vb{G}_{\vb{M}}^N(z)]_{11} - 1\right) \xrightarrow{\mathcal{L}} \mathcal{N}_{\C}(0, t)
    \end{equation}where $t$ admits a probability density function given by
    \begin{equation}
        t\in \R^{+}\mapsto \frac{1}{t^2}e^{-\frac{1}{2t^2} + \frac{\alpha}{t}}\left(\frac{1 - \alpha t}{t \sqrt{2\pi}}\int_{\alpha}^{+\infty}e^{-v^2/2}dv  + \frac{e^{-\frac{\alpha^2}{2}}}{\sqrt{2\pi}}\right).
    \end{equation}
    \item If $|z|^2 = 1 + f(N)$ with $\frac{1}{\sqrt{N}}\ll f(N)$, then 
    \begin{equation}\label{regime5}
        \sqrt{Nf(N)|z|^2}\left([\vb{G}_{\vb{M}}^N(z)]_{11} - \frac{1}{z}\right) \xrightarrow{\mathcal{L}} \mathcal{N}_{\C}(0,1).
    \end{equation}
    In particular, if $|z|>1$ fixed, then 
    \begin{equation}\label{regime6}        \sqrt{N}\left([\vb{G}_{\vb{M}}^N(z)]_{11} - \frac{1}{z}\right) \xrightarrow{\mathcal{L}} \mathcal{N}_{\C}\left(0, \frac{1}{|z|^2(|z|^2 - 1)}\right).
    \end{equation}
    \end{regimes}
\end{theorem}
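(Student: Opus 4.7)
The plan is to use Theorem \ref{theorem1} directly: conditionally on $t$, one has $1/[\vb{G}_{\vb{M}}^N(z)]_{11} - z \sim \mathcal{N}_{\C}(0, (1+t)/N)$, so the limiting law of $[\vb{G}_{\vb{M}}^N(z)]_{11}$ in each regime follows from three steps: (i) extract the limit distribution of the random variance $V := (1+t)/N$, possibly after rescaling; (ii) compute the resulting Gaussian mixture giving $X := 1/[\vb{G}_{\vb{M}}^N(z)]_{11} - z$; (iii) push $X$ forward through the holomorphic map $G = 1/(z+X)$, whose real Jacobian is $|G|^{-4}$. Steps (ii)--(iii) are elementary; the substance of the proof lies in the asymptotic analysis of the density \eqref{sigma} of $t$.

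For Regime (1) with $|z|<1$ fixed, the argument $N|z|^2$ is well below the Laplace maximum $N-2$ of $u^{N-2}e^{-u}$, so $\Gamma(N-1,N|z|^2) \sim (N-2)!$, while a Stirling estimate shows that the other bracket term $(N|z|^2)^{N-1}$ is exponentially subdominant. Under the substitution $V = (1+t)/N$, a Laplace argument on $t^{N-2}/(1+t)^N$ for $t$ of order $N$ converts \eqref{sigma} into the inverse-gamma limit density $(1-|z|^2)V^{-2}\exp(-(1-|z|^2)/V)$. Since a complex Gaussian mixed with an inverse-gamma variance is a complex Student-$2$, the variable $X$ converges to such a distribution with scale $\sqrt{1-|z|^2}$. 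Pushing forward and using the algebraic identity $|G|^2(1-|z|^2) + |1-zG|^2 = (1-|z|^2) + |G-\overline{z}|^2$ yields \eqref{student}. The generalized scaling \eqref{regime2} is handled identically, with $g(N)$ in place of $1-|z|^2$.

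Regime (5) is dual: the second bracket term in \eqref{sigma} now dominates. A single integration by parts on $\Gamma(N-1,N|z|^2)e^{N|z|^2}$ gives the asymptotic $\sim (N|z|^2)^{N-1}/(N|z|^2-N+1)$, and a saddle-point analysis of $(N-2)\log t - N\log(1+t) - N|z|^2 t/(1+t)$ places the mode of $t$ at $t^\star = 1/(|z|^2-1) + O(1/N)$ with fluctuations $O(1/\sqrt{N})$, so $t$ concentrates. Thus $X$ is asymptotically a complex Gaussian of variance $(1+t^\star)/N = |z|^2/(N(|z|^2-1))$, and the first-order expansion $G - 1/z \approx -X/z^2$ converts this into \eqref{regime6}. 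The intermediate scaling \eqref{regime5} follows analogously with $f(N)$ replacing $|z|^2-1$.

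The main obstacle is Regimes (3) and (4), where $|z|^2 = 1 + O(N^{-1/2})$ sits exactly at the crossover. Neither the deep-inside nor the deep-outside asymptotic for $\Gamma(N-1,N|z|^2)$ applies; instead, the Gaussian crossover must be used: for $|z|^2 = 1 + \alpha/\sqrt{N}$ the regularized upper incomplete gamma $\Gamma(N-1,N|z|^2)/(N-2)!$ tends to $(1/\sqrt{2\pi})\int_\alpha^\infty e^{-v^2/2}\,dv$, while Stirling gives $(N|z|^2)^{N-1}/(N-2)! \sim \sqrt{N/(2\pi)}\, e^{N+\alpha\sqrt{N}-\alpha^2/2}$. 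Balancing these two contributions against $t^{N-2}/(1+t)^N$ forces the rescaling $t = s\sqrt{N}$, after which a careful exponent expansion shows that the density of $s$ converges to the sum of an erfc-weighted piece and a Gaussian boundary contribution, exactly matching the stated density. The $N^{1/4}$ window in \eqref{regime3}--\eqref{regime4} is then forced by $(1+t)/N \sim s/\sqrt{N}$, and Regime (3) appears as the $\alpha = 0$ specialization of Regime (4). All the fine Laplace and saddle-point estimates are carried out in detail in the appendix.
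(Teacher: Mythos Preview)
Your proof is correct and follows essentially the same route as the paper: start from Theorem~\ref{theorem1}, determine the limiting law of the appropriately rescaled mixing variable $t$ in each regime via the incomplete-gamma and Stirling asymptotics, identify the resulting Gaussian mixture, and then transfer the result from $1/G-z$ to $G$. The paper packages the last step as two lemmas (the delta-method Lemma~\ref{inverse_distrib} and the explicit Student-inversion Lemma~\ref{inverse}), whereas you do it by a direct Jacobian computation together with the algebraic identity $|G|^2(1-|z|^2)+|1-zG|^2=(1-|z|^2)+|G-\bar z|^2$; these are equivalent. One cosmetic point: your regime numbering is off by one (what you call ``Regime~(5)'' is Regime~(4) of the theorem, and your ``Regimes~(3) and (4)'' are the theorem's Regimes~(2) and (3)); the equation references are correct, so this is harmless but should be fixed.
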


\paragraph{Remarks}

\begin{itemize}
    \item As the statistics of $\vb{M}$ only depends on $\Tr(\vb{M}^*\vb{M})$, the law of each diagonal element follows the same law as $[\vb{G}_{\vb{M}}(z)]_{11}$. Furthermore, the bi-invariance of $\vb{M}$ \footnote{In the sense that $ \vb{M} $ and $ \vb{U} \vb{M} \vb{V} $ follow the same distribution for all unitary matrices $ \vb{U} $ and $ \vb{V} $.}
 implies that all these distributions depend only on $|z|$. 
    \item The expected value of $ [\vb{G}_{\vb{M}}^N(z)]_{11} $ is $ \mathbb{E}(\g_{\vb{M}}^N(z)) $, which converges to $ \g_{M}(z) $. So we expect all the limiting distributions described in the previous theorem to have expectations of $\overline{z}$ when $z$ is within the unit disk and $\frac{1}{z}$ otherwise, see Eq. \eqref{eq:ginibre}. One can check that this is indeed the case. 

    \item The presence of eigenvalues around the point $z$ leads to the dispersion of the random variable $[\vb{G}_{\vb{M}}^N(z)]_{11}$. The worst-case scenario occurs when $z$ is inside the unit disk since $[\vb{G}_{\vb{M}}^N(z)]_{11}$ does not even have a finite variance. As we move away from the bulk, $[\vb{G}_{\vb{M}}^N(z)]_{11}$ concentrates around its mean.
    
    \item Within the bulk of the spectrum $ |z| < 1 $, the limiting distribution of $ [\vb{G}_{\vb{M}}(z)]_{11} $ is a generalization of the Cauchy distribution (Eq. \eqref{dis}). Interestingly, $ [\vb{G}_{\vb{M}}(z)]_{11} $ and $ 1/\overline{[\vb{G}_{\vb{M}}(z)]_{11}} $ have the same distribution by Lemma \ref{inverse}. Furthermore, we can more easily show that $\Omega$ and $ 1/\Omega
    $ follow the same distribution. 

    \item We can recover regime (2) by taking $\alpha = 0$ in regime (3).
    
    \item The result \eqref{regime2} is an interesting theoretical finding demonstrating the robustness of the Student fluctuation (Eq. \eqref{student}) when $ z $ is close to the unit circle but inside the disk. Similarly, the result \eqref{regime5} shows the robustness of the Gaussian behavior (Eq. \eqref{regime6}) when $ z $ is close to the unit circle but outside the disk.

    \item The tails arise from the presence of an eigenvalue very close to $z$. We will analyze in detail the tails of regime (1) in the next section. Note that such tails continue to exist up to the regime $|z|^2 = 1 + \frac{\alpha}{\sqrt{N}}$. This is because when $N$ is finite, there may well be eigenvalues slightly larger than $1$. More quantitatively, one can read in \cite{rider2003limit} that the largest absolute value of the eigenvalues of an $N\times N$ complex Ginibre matrix can be approximated by

\begin{equation}
    1 + \sqrt{\frac{\gamma_N}{N}} - \frac{1}{\sqrt{4N \gamma_N}} \log(Z)
\end{equation}

where $Z$ is an exponential random variable with mean 1, $\gamma_N = \log \frac{N}{2\pi} - 2 \log \log N$, and $N \to \infty$.
\end{itemize}

\section{Numerical results of the resolvent diagonal elements}

We numerically verify the results of Theorem \ref{main} for the four regimes.

\begin{figure}[H]
    \centering
    \begin{subfigure}{0.47\textwidth}
        \centering
        \includegraphics[width=\textwidth]{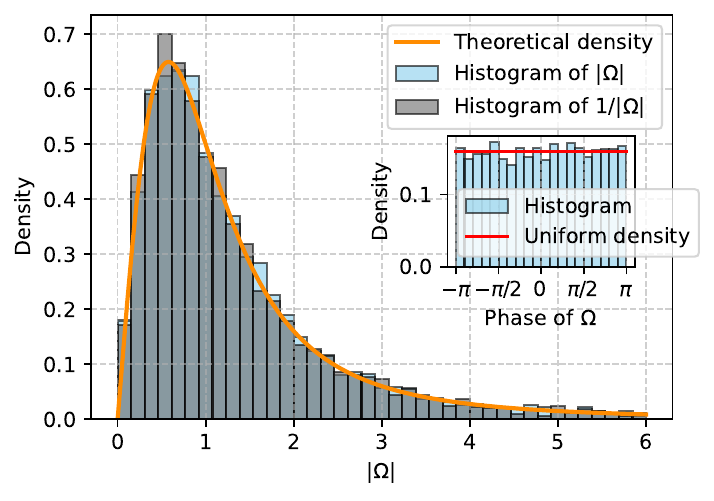}
        \caption{Distribution of $\Omega$.}
    \end{subfigure}
    \hfill
    \begin{subfigure}{0.47\textwidth}
        \centering
        \includegraphics[width=\textwidth]{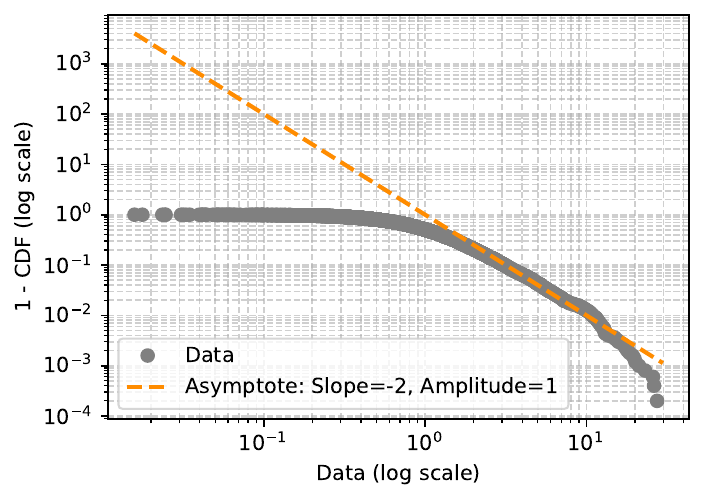}
        \caption{Log-log plot of the tail of $\Omega$'s distribution.}
    \end{subfigure}
    \caption{Model (1). Simulation of $5000$ complex Ginibre matrices of size $1000$ for the variable $\Omega:= \frac{1}{\sqrt{1 - |z|^2}}\left([\mathbf{G}_{\mathbf{M}}(z)]_{11} - \overline{z}\right) $ with $ z = 0.7 $. The distribution of $\Omega$ is compared with $1/\Omega$ (which is expected to be the same according to the third remark of the previous theorem), as well as with the theoretical density \eqref{stu}, whose tail is given by $ 1/G^2 $.}

\end{figure}

\begin{figure}[H]
    \centering
    \begin{subfigure}{0.47\textwidth}
        \centering
        \includegraphics[width=\textwidth]{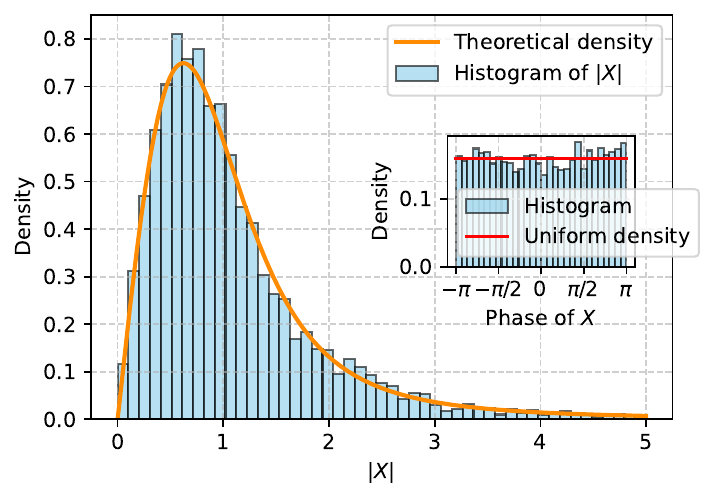}
        \caption{Distribution of $ X $.}
    \end{subfigure}
    \hfill
    \begin{subfigure}{0.47\textwidth}
        \centering
        \includegraphics[width=\textwidth]{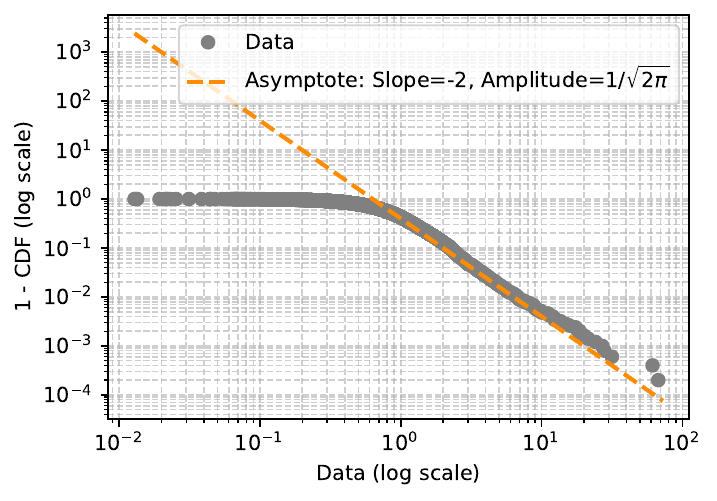}
        \caption{Log-log plot of the tail of $ X $'s distribution.}
    \end{subfigure}
    \caption{Model (2). We simulate $5000$ complex Ginibre matrices of size $1000$ and study the variable $X:=N^{1/4}\left([\mathbf{G}_{\mathbf{M}}(z)]_{11} - 1/z\right)$ where we have chosen $z = \sqrt{1 + 1/N}$. $X$ should follow the theoretical density \eqref{regime3} which is plotted using numerical integration. The theoretical tail is $1/(\sqrt{2\pi}G^2)$.}
\end{figure}

\begin{figure}[H]
    \centering
    \begin{subfigure}{0.47\textwidth}
        \centering
        \includegraphics[width=\textwidth]{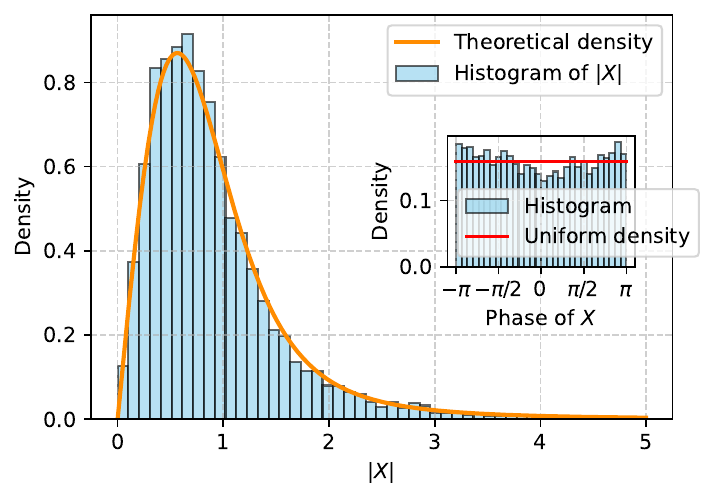}
        \caption{Distribution of $ X $.}
    \end{subfigure}
    \hfill
    \begin{subfigure}{0.47\textwidth}
        \centering
        \includegraphics[width=\textwidth]{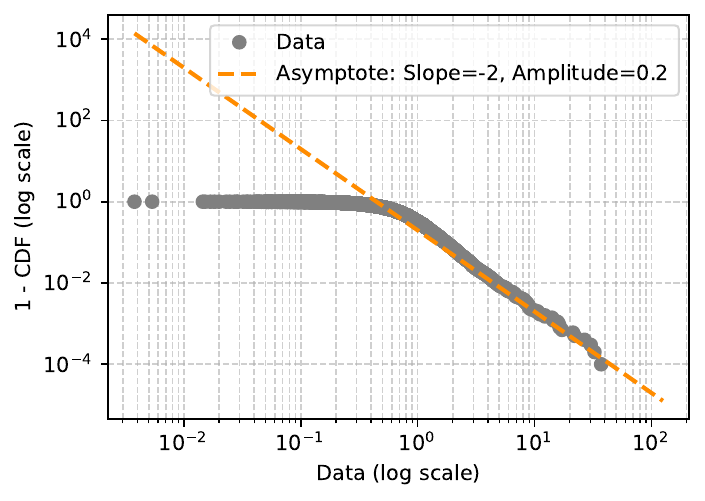}
        \caption{Log-log plot of the tail of $ X $'s distribution.}
    \end{subfigure}
    \caption{Model (3). We simulate $5000$ complex Ginibre matrices of size $1000$ and study the variable $ X := N^{1/4}\left([\mathbf{G}_{\mathbf{M}}(z)]_{11} - \frac{1}{z}\right) $, where we have chosen $ z = \sqrt{1 + \frac{1}{2\sqrt{N}}} $. The variable $ X $ is expected to follow the theoretical density given in \eqref{regime4}, which is plotted using numerical integration. The theoretical tail is $ \frac{1}{G^2}\left(\frac{\alpha}{2} \operatorname{erfc}\left(\frac{\alpha}{\sqrt{2}}\right) + \frac{\exp\left(-\frac{\alpha^2}{2}\right)}{\sqrt{2\pi}}\right) $ (where $ \operatorname{erfc} $ is the complementary error function), which is approximated as $ 0.2 $ when $ \alpha = \frac{1}{2} $.}

\end{figure}

\begin{figure}[H]
    \centering
    \begin{subfigure}{0.47\textwidth}
        \centering
        \includegraphics[width=\textwidth]{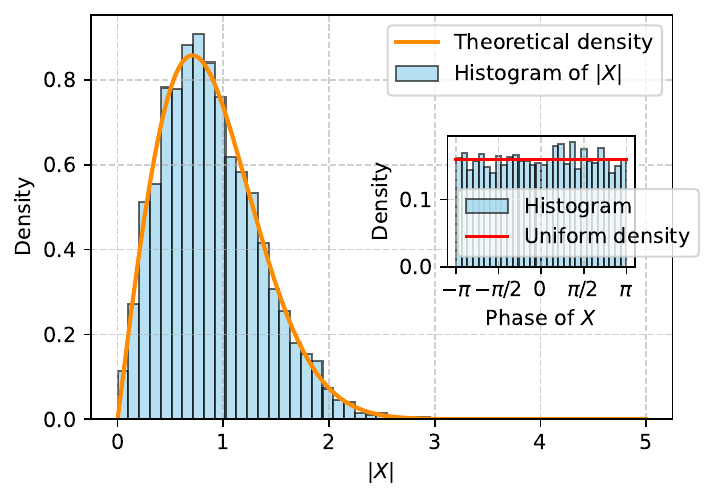}
        \caption{}
    \end{subfigure}
    \caption{Model (4). We simulate $10000$ complex Ginibre matrices of size $1000$ and study the variable $X:=\sqrt{N} \sqrt{|z|^2(1 - |z|^2)}\left([\mathbf{G}_{\mathbf{M}}(z)]_{11} - 1/z\right)$ where we have chosen $ z = 2 $. $ X $ should follow the theoretical density given by a standard complex Gaussian distribution according to Regime (4) of Theorem \ref{main}.}
\end{figure}

\section{Analysis of tails in the bulk region}

\subsection{Tail of the Stieltjes transform}

When $\vb{M}$ is a Hermitian matrix, the ideas from \cite{cizeau1994theory} recalled in the introduction yield

\begin{equation}
    P(|\g_{\vb{M}}(x)| \geq G) \underset{G \to +\infty}{\sim} \frac{\rho_{\vb{M}}(x)}{G},
\end{equation}
where $x$ is inside the support of the spectrum.

We now extend the argument to the case where $\vb{M}$ is a non-Hermitian matrix and $z \in \mathbb{C}$ lies in the bulk of the complex spectrum. When the eigenvalues of $\vb{M}$ are complex, the situation is quite different from the Hermitian case because $\g_{\vb{M}}$ can now be defined inside the spectrum. When $N$ is large, $z$ will always be close to an eigenvalue $\lambda$ of $\vb{M}$. The Stieltjes transform is thus dominated by that particular $\lambda$:

\begin{equation}
    |\g_{\vb{M}}^{N}(z)| \approx \frac{1}{N} \frac{1}{|z - \lambda|}.
\end{equation}
One can thus write

\begin{align}\label{firsttail}
    \mathbb{P}\left(|\g_{\vb{M}}^{N}(z)| \geq \frac{G}{\sqrt{N}}\right) = \mathbb{P}\left( |z - \lambda| \leq \frac{1}{\sqrt{N}G}\right) \approx N \rho_{\vb{M}}(z) \frac{\pi}{N G^2}.
\end{align}

This suggests that, again, the tail of the distribution of the Stieltjes transform contains information about the local density of eigenvalues, but now decays as $\pi \rho_{\vb{M}}(z)/G^3$. This tail is embedded within the fluctuations of $\g_{\vb{M}}^{N}(z)$, which, unlike in the Hermitian case, converges to a limit $\g_{\vb{M}}$. In the non-Hermitian case, it appears more appropriate to study the random variable

\begin{equation}\label{defs}
    \s_{\vb{M}}^N(z) := \sqrt{N}(\g_{\vb{M}}^N(z) - \g(z))
\end{equation}
and particularly its limit, denoted by $\s_{\vb{M}}$, as $N \to +\infty$. The distribution of $\s_{\vb{M}}(z)$ will be the subject of a conjecture in the next section. Thanks to Equation \eqref{firsttail}, we have

\begin{equation}\label{tailcomp}
    P(|\s_{\vb{M}}(z)| \geq r) \underset{r \to +\infty}{\sim} \frac{\pi \rho_{\vb{M}}(z)}{r^2}.
\end{equation}

Note that this tail is different from that obtained for $ |[\vb{G}_{\vb{M}}(z)]_{11}| $. Indeed, Theorem \ref{main} gives

\begin{equation}
\mathbb{P}(|[\vb{G}_{\vb{M}}(z)]_{11}| \geq G) \underset{G\to +\infty}{\approx} \frac{1 - |z|^2}{G^2}.
\end{equation}

In order to understand the extra factor $1 - |z|^2$, we need to introduce the concept of eigenvector overlaps.

\subsection{Left and right eigenvector overlaps}

In the case of Ginibre matrices, $\vb{M}$ is in general not normal (i.e. $\vb{M} \vb{M}^* \neq \vb{M}^* \vb{M}$). For a random matrix $\vb{M}$, we can safely assume that all its $N$ eigenvalues have multiplicity one, ensuring $\vb{M}$ is still diagonalisable by a transformation to its eigenbasis $\vb{M} = \vb{P} \vb{D} \vb{P}^{-1}$\label{diagon} with $\vb{D} = \diag(z_1, ..., z_n)$ its complex eigenvalues. We know that in general $\vb{P}^{-1}\neq \vb{P}^*$. 

We define the right eigenvectors $\ket{R_i}$ as the columns of the matrix $\vb{P}$ and the left eigenvectors $\bra{L_i}$ and as the rows of $\vb{P}^{-1}$. Thus, we can rewrite \eqref{diagon} as 
\begin{equation}
    \vb{M} = \sum_{i=1}^N \lambda_i \ket{R_i}\bra{L_i}.
\end{equation}
Since, $\vb{P} \vb{P}^{-1} = I_N$, we get
\begin{equation}
    \bra{R_i} \ket{L_j} = \delta_{i, j}.
\end{equation}
When the matrix $\vb{M}$ is normal, we further have $\ket{R_i} = \ket{L_i}$. 

The first object that allows qualitative insights into the non-orthogonality of the eigenvectors, is the so-called overlap matrix $\mathcal{O}$ which is defined as \cite{chalker1998eigenvector}
\begin{equation}
    \mathcal{O}_{i, j} = \bra{L_j}\ket{L_i} \bra{R_j}\ket{R_i}.
\end{equation}
The diagonal entries of $\mathcal{O}$ are called self-overlaps. The exploration of these overlaps is motivated by multiple applications, especially in physics. For instance, these self-overlaps are useful to understand the sensitivity of the eigenvalues of $\vb{M}$ to a perturbation. Note that for Hermitian matrices, $ \mathcal{O}_{i, j} \equiv \delta_{i,j}$.

One way to study the statistical properties of these quantities is to look at the single-point function \cite{chalker1998eigenvector}
\begin{equation}\label{for}
    \mathcal{O}(z) = \left<\frac{1}{N}\sum_n \mathcal{O}_{nn} \delta(z - z_n)\right>.
\end{equation}
where the average is taken with respect to the probability of the random matrix $\vb{M}$ and $\delta$ is the complex delta function.

In \cite{chalker1998eigenvector}, Chalker and Mehlig computed the leading asymptotic behaviour of $\mathcal{O}(z)$ as $N$ goes to infinity when $\vb{M}$ is a complex Ginibre ensemble, which reads
\begin{equation}\label{leading}
    \mathcal{O}(z) \approx \frac{N (1 - |z|^2)}{\pi}
\end{equation}inside the unit disk. Note that the overlap is of order $N$ and not of order 1 like for Hermitian matrices.
For more general non-normal matrices, one can think that $\mathcal{O}(z)$ will be of order $N$, thus we introduce the rescaled self-overlap $\Tilde{\mathcal{O}}(z) = N^{-1}{\mathcal{O}(z)}$, which is expected to converge to a limit as $N$ tends to infinity.

Moreover, ${\mathcal{O}}(z)$ being the result of a local average, one can we study $\Tilde{\mathcal{O}}_{nn}$ as a random variable. In particular,  Bourgade and  Dubach \cite{bourgade2020distribution} proved that conditionally on $z_1 = z$, the rescaled diagonal overlap $\Tilde{\mathcal{O}}_{11}$ converges in distribution to an inverse Gamma random variable as $N$ goes to infinity when $\vb{A}$ is a complex Ginibre matrix. More precisely, they proved that
\begin{equation}\label{inverse_gamma}
    \frac{\mathcal{O}_{11}}{N(1- |z|^2)} \underset{N\to \infty}{\longrightarrow} X
\end{equation}
where $X$ is a random variable with probability density
\begin{equation}\label{secontail}
    x\mapsto\frac{1}{x^{3}} e^{-\frac{1}{x}}.
\end{equation}
The heavy tail $\frac{1}{x^3}$ was predicted in \cite{mehlig2000statistical} and the first moment of this distribution is the precisely leading order term of $\mathcal{O}(z)$ given in Eq. \eqref{leading}.

In 2018,  Fyodorov \cite{fyodorov2018statistics} developed a method to compute and analyze these overlaps, investigating bulk and edge scaling limits and recovered this result. Additionally,  Fyodorov used the same method to understand the overlaps for other ensembles: the real Ginibre ensemble \cite{wurfel2310mean} and the complex elliptic Ginibre ensemble \cite{crumpton2024mean}.

\subsection{A tale of two tails?}

A priori, the tail of $ G = [\vb{G}_{\vb{M}}^N(z)]_{11} $ could arise from the combination of two rare events: an eigenvalue $ z_n $ very close to $ z $ and an exceptionally large value for the associated overlap $ \mathcal{O}_{nn} $. Therefore, we need to understand how the tail of Eq. \eqref{firsttail} and that of Eq. \eqref{secontail} combine. A general, direct argument, however, suggests that provided the mean value of the distribution of $ \Tilde{\mathcal{O}}_{nn} $ is finite and equal to $ \overline{O} $, the tail of the distribution of $ G $ is given by $ \overline{O}/G^2 $ and is dominated by the proximity of eigenvalue $ z_n $ and not by the tail of Eq. \eqref{secontail}.

Let us make this argument more precise. Using left and right eigenvectors, one can write
\begin{equation}\label{sum}
   [\vb{G}_{\vb{M}}(z)]_{11} = \sum_{i=1}^N \frac{1}{z - \lambda_i} [L_i]_1 [R_i]_1.
\end{equation}
When $ N $ is large, $ z $ will always be close to an eigenvalue $ z_n $ of $ \vb{M} $ as $ N $ goes to infinity. Thus, the sum \eqref{sum} is dominated by this particular term:
\begin{equation}\label{approximation}
    [\vb{G}_{\vb{M}}(z)]_{11} \approx \frac{[L_n]_1 [R_n]_1}{z - z_n} 
\end{equation}
or in other words
\begin{equation}
    \left\lvert[\vb{G}_{\vb{M}}(z)]_{11}\right\rvert^2 \approx  \left\lvert\frac{[L_n]_1 [R_n]_1}{z - z_n} \right\rvert^2.
\end{equation}

Using the rotational invariance of $ \vb{M} $\footnote{We say that a matrix $ \vb{M} $ is rotationally invariant if $ \vb{M} $ and $ \vb{U} \vb{M} \vb{U}^* $ follow the same distribution for all unitary matrices $ \vb{U} $.} and the fact that there is only one constraint on $ L_n, R_n $, which is that $ \bra{L_n}\ket{R_n} = 1 $, we can consider that $ |[L_n]_1 [R_n]_1|^2 $ will behave like $ \mathcal{O}_{nn}/N^2 $ conditioned to $ z = z_n $. Using Eq. \ref{inverse_gamma}, we thus know that given $ z_n $, the probability density function of the variable
\begin{equation}
    U = \left\rvert\frac{[L_n]_1 [R_n]_1}{z - z_n} \right\rvert^2
\end{equation}
is 
\begin{equation}
    u\mapsto\frac{(1 - |z|^2)^2}{N^2 |z - z_n|^4 u^3} \exp\left(-\frac{(1 - |z|^2)}{N |z - z_n|^2 u}\right).
\end{equation}
Furthermore, a slight generalization of Eq. \eqref{firsttail} yields 
\begin{equation}
    P(\sqrt{N}|z - z_n|\leq a) \underset{N\to \infty}{\longrightarrow} 1 - \exp(-\pi \rho(z) a^2).
\end{equation}
In other words, $ {N}|z - z_n|^2 $ converges in distribution to a random variable with a probability density function given by
\begin{equation}
    x\mapsto \pi \rho(z) \, \exp(-\pi \rho(z) \, x).
\end{equation}
Hence, the unconditional limiting density function of $ U $ is given by
\begin{equation}
    f := u\mapsto\int_0^{+\infty} \frac{(1 - |z|^2)^2 \pi \rho(z)}{x^2 u^3} \exp\left(-\frac{(1 - |z|^2)}{u x}\right) e^{-\pi \rho(z) x} \, {\rm d}x.
\end{equation}
The change of variables $ v = ux $ then gives
\begin{equation}
    f(u) =  \frac{(1 - |z|^2)^2 \pi \rho(z)}{u^2} \int_0^{+\infty} \frac{1}{v^2} \exp\left(-\frac{(1 - |z|^2)}{v}\right) e^{-\frac{\pi \rho(z)}{u v}} \, {\rm d}v.
\end{equation}
Since 
\begin{equation}
    \int_0^{+\infty} \frac{1}{v^2} \exp\left(-\frac{(1 - |z|^2)}{v}\right) e^{-\frac{\pi \rho(z)}{u v}} \, {\rm d}v \underset{u\to \infty}{\longrightarrow} \int_0^{+\infty}  \frac{1}{v^2} \exp\left(-\frac{(1 - |z|^2)}{v}\right) {\rm d}v = \frac{1}{1 - |z|^2},
\end{equation}
we finally have
\begin{equation}\label{tail}
    f(u)\underset{u\to \infty}{\sim} \frac{(1 - |z|^2) \pi \rho(z)}{u^2} = \frac{1 - |z|^2}{u^2}
\end{equation}
since $ \pi \rho(z) = 1 $ when $ \vb{M} $ is a Ginibre matrix.

Returning to our approximation \eqref{approximation} using Eq. \eqref{tail}, we can then conclude that our non-rigorous reasoning in this section allows us to predict the correct tail behavior of $ [\vb{G}_{\vb{M}}^N(z)]_{11} $ predicted by the theorem, which is
\begin{equation}
    P(|[\vb{G}_{\vb{M}}^N(z)]_{11}|\geq G) \underset{G\to \infty}{\sim} \frac{1 - |z|^2}{G^2},
\end{equation}
precisely as the result we surmised using a general argument about the tail of variables of the form $ \sqrt{\mathcal{O}(z)}/(z - z_n) $.

\section{Conjectures}\label{conjectures}

\subsection{On the resolvent diagonal elements}

In fact, the previous reasoning prompts us to conjecture a certain universality, namely a generalization of Eq. \eqref{student}:
\begin{conjecture}\label{first_conjecture}
    Let $\vb{M}$ be a rotationally invariant random matrix of size $N$ which admits a non-pathological limiting spectral density. We define 
    \begin{equation}
        [\vb{G}_{\vb{M}}^N(z)]_{11} := [(z\vb{1} - \vb{M})^{-1}]_{1 1}
    \end{equation}
    for $z$ outside the spectrum of $\vb{M}$. Then,the distribution of $[\vb{G}_{\vb{M}}^N(z)]_{11}$ converges in law when $N\to \infty$ towards a complex Student law with $\nu = 2$ with a density function given by
    \begin{equation}\label{student2}
        \omega\in \C\mapsto \frac{1}{\pi}\frac{\beta}{(\beta + |\omega - \g_{\vb{M}}(z)|)^2}.
    \end{equation}
    where $\beta = \pi\Tilde{\mathcal{O}}(z)$ where $\Tilde{\mathcal{O}}(z)$ is the rescaled self-overlap of $\vb{M}$ corresponding to $z$ and $\g_{\vb{M}}$ the limiting Stieltjes transform of $\vb{M}$.

   In other words, the distribution of

\begin{equation}
    \Omega := \frac{[\vb{G}_{\vb{M}}^N(z)]_{11} - \g_{\vb{M}}(z)}{\sqrt{\pi \Tilde{O}(z)}}
\end{equation}
no longer depends on $\vb{M}$ or $z$, and its density function is given by

\begin{equation}\label{t}
    \omega \in \mathbb{C} \mapsto \frac{1}{\pi} \frac{1}{(1 + |\omega|^2)^2}.
\end{equation}

\end{conjecture}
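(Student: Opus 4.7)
The natural starting point is the eigenvector expansion used in Section~4.3,
$$[\vb{G}_{\vb{M}}^N(z)]_{11} - \g_{\vb{M}}^N(z) = \sum_{i=1}^N \frac{[L_i]_1[R_i]_1 - 1/N}{z - \lambda_i},$$
combined with $\g_{\vb{M}}^N(z) \to \g_{\vb{M}}(z)$ and the fact that rotational invariance of $\vb{M}$ forces the complex amplitudes $A_i := N[L_i]_1[R_i]_1 - 1$ to have uniform phase conditional on the spectrum, with $|A_i|^2$ essentially distributed like $N\Tilde{\mathcal{O}}_{ii}$ (mean $N\Tilde{\mathcal{O}}(z)$). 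The plan is to argue that, to leading order, the Student fluctuation of $[\vb{G}_{\vb{M}}^N(z)]_{11}$ around $\g_{\vb{M}}(z)$ is produced by the handful of eigenvalues at distance $O(N^{-1/2})$ of $z$, with the distant ones contributing only the deterministic shift $\g_{\vb{M}}(z)$ plus vanishing corrections.

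First I would split the sum at a mesoscopic threshold $|\lambda_i - z|\le N^{-1/2+\delta}$. For the outer region I would show, via a second-moment estimate using $\mathbb{E}|A_i|^2=O(N\Tilde{\mathcal{O}}(z))$ and the regularity of the limiting density around $z$, that the centered contribution concentrates at zero. For the inner region I would rescale $w_i = \sqrt{N}(\lambda_i - z)$ and $\alpha_i = A_i/\sqrt{N}$: under rotational invariance and the hypothesis of a non-pathological local limit, the marked point process $\{(w_i,\alpha_i)\}$ should converge to a limiting point process of intensity $\rho(z)$ in the $w$-plane, whose marks have uniform phase and $\mathbb{E}|\alpha|^2 = \Tilde{\mathcal{O}}(z)$.

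Next I would compute the characteristic function of the limiting shot-noise sum $-\sum_i \alpha_i/w_i$ using the Campbell formula. Integrating over the phase of $w$ yields a Bessel $J_0(|\theta||\alpha|/|w|)$ kernel, and the substitution $r=|\theta||\alpha|s$ absorbs the moduli $|\alpha|$ so that only $\mathbb{E}|\alpha|^2=\Tilde{\mathcal{O}}(z)$ survives; the remaining radial integral depends on $z$ and $\vb{M}$ only through the product $\rho(z)\Tilde{\mathcal{O}}(z)$, which should match the characteristic function of the complex Student law with density \eqref{t} rescaled by $\sqrt{\pi \Tilde{\mathcal{O}}(z)}$. The Ginibre case of Theorem~\ref{main}, together with the Bourgade--Dubach inverse-Gamma law for diagonal overlaps and the identity $\pi \rho_{\vb{M}}(z) = 1$ in the disk, serves as a calibration of the normalization constants and a consistency check of the computation.

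The principal obstacle is the universality step. For a generic rotationally invariant ensemble, neither the local convergence of the eigenvalue point process around a bulk point nor the conditional law of the self-overlap is known in closed form, while the conjecture demands that the limiting distribution depend on these only through the two scalars $\rho_{\vb{M}}(z)$ and $\Tilde{\mathcal{O}}(z)$. A related and more technical difficulty is the borderline divergence of the shot-noise sum near $w = 0$, which has to cancel exactly against the $\g_{\vb{M}}^N(z)$ centering; controlling this cancellation uniformly in the mesoscopic cutoff $N^{-1/2+\delta}$ is the step where local rigidity estimates for non-Hermitian matrices, currently available only for Ginibre-type ensembles, would be required.
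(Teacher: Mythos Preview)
The statement you are attempting is labelled in the paper as a \emph{conjecture}, and the paper offers no proof of it; it supplies only numerical evidence and the heuristic tail argument of Section~4.3. The single case that is proved rigorously is the complex Ginibre ensemble (Regime~(1) of Theorem~\ref{main}), and that proof runs through an entirely different mechanism from yours: the Schur complement identity expresses $1/[\vb{G}_{\vb{M}}^N(z)]_{11}-z$ as a complex Gaussian with \emph{random} variance $(1+t)/N$, Fyodorov's closed formula for $\mathcal{D}_N^{(1)}$ fixes the exact finite-$N$ law of $t$, and the Student distribution then appears in Lemma~\ref{gamma} as the Gaussian/inverse-Gamma scale mixture. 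The eigenvector expansion you start from is used in the paper only to predict the tail amplitude $\pi\Tilde{\mathcal{O}}(z)/G^2$, never to recover the full law.

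Your shot-noise route has a genuine gap at the decisive step, not merely a technical one. The Campbell-type computation you describe --- integrate out the phase of $w$ to obtain a $J_0(|\theta||\alpha|/|w|)$ kernel, then substitute so that ``only $\E|\alpha|^2$ survives'' --- is the log-Laplace functional of a \emph{Poisson} process with i.i.d.\ marks, and it produces an exponent proportional to $\rho(z)\,\E|\alpha|^2\,|\theta|^2$ times a scale-free radial integral. Modulo the logarithmic divergence you already flag, that is Gaussian in $\theta$; it cannot reproduce the characteristic function of the density~\eqref{t}, which equals $|\theta|\,K_1(|\theta|)$ and is not of the form $\exp(-c|\theta|^{\alpha})$ for any $\alpha$. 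So even granting Poisson local statistics and i.i.d.\ overlap marks, the limit would not be Student --- and for Ginibre neither premise holds, since the bulk process is determinantal with level repulsion and the self-overlaps are strongly coupled to the configuration. The Ginibre proof makes clear that the Student structure lives on the \emph{inverse} side: it is the inverse-Gamma fluctuation of the Schur variance $\vb{x}^*(zI-\vb{F})^{-*}(zI-\vb{F})^{-1}\vb{x}/N$ that generates the mixture. A plausible attack on Conjecture~\ref{first_conjecture} would therefore have to control the law of this random variance for a general rotationally invariant $\vb{M}$ and relate its scale to $\Tilde{\mathcal{O}}(z)$, rather than try to assemble the limit out of nearby eigenvalues.
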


\paragraph{Remarks}

\begin{itemize}
    \item When $\vb{M}$ is a Ginibre matrix, we recover the distribution \eqref{student} since $\Tilde{\mathcal{O}}(z) = \frac{1 - |z|^2}{\pi}$ using the results recalled in \eqref{leading}.
    
    \item The tail of $[\vb{G}_{\vb{M}}^N(z)]_{11}$ is thus given by
    \begin{equation}
    P(|[\vb{G}_{\vb{M}}^N(z)]_{11}| \geq G) \underset{G\to \infty}{\sim} \frac{\pi\Tilde{O}(z)}{G^2}.
    \end{equation}

    \item Note that $Y$ and $1/Y$ follow the same distribution, providing a very efficient method to numerically verify our conjecture.

    \item Figure \eqref{veri} aims to verify this conjecture for a non-trivial example using a GinUE matrix\footnote{Recall that the complex elliptic Ginibre ensemble (GinUE) consists of mean-zero i.i.d. complex Gaussian entries with the following correlation structure: $ E(X_{ii}^2) = \frac{1}{N} $ and $ E(X_{ij} X_{ji}) = \frac{\tau}{N} $. }. So, we need to estimate $\beta$. It is important to note that if we average the rescaled self-overlaps associated with the complex eigenvalues closest to $ z $, within a radius of $ \frac{1}{N^{0.25}} $ from a large sample set of $ M $, then we will be estimating

\begin{equation}
    \mathbb{E}(\Tilde{\mathcal{O}}_{nn} \mid z = z_n) = \frac{\tilde{\mathcal{O}}(z)}{\rho(z)},
\end{equation}
which is not equal to $ \tilde{\mathcal{O}}(z) $. Therefore, if we wish to estimate $ \beta $ directly, we need to multiply the previous estimate by $ \pi \rho(z) $, which can be readily approximated.

\end{itemize}

\begin{figure}[H]
    \centering
    \begin{subfigure}{0.47\textwidth}
        \centering
        \includegraphics[width=\textwidth]{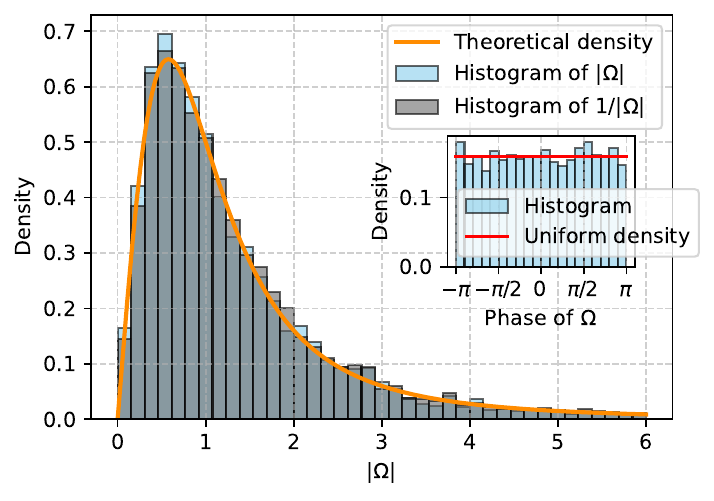}
        \caption{Distribution of $\Omega$.}
    \end{subfigure}
    \hfill
    \begin{subfigure}{0.47\textwidth}
        \centering
        \includegraphics[width=\textwidth]{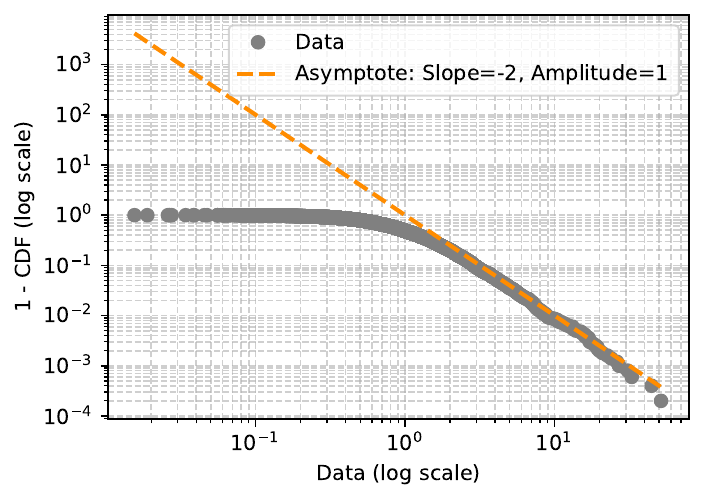}
        \caption{Log-log plot of the tail of $\Omega$'s diribution.}
    \end{subfigure}
    
    \caption{We choose a non-trivial example by taking a matrix $ \vb{M} = \vb{A}\vb{B}\vb{C} $ of size $ 300 $, where $ \vb{A} $ is a GinUE matrix with parameter $ \tau = 0.5 $, $ \vb{B} $ is a unitary matrix randomly chosen according to the Haar measure, and $ \vb{C} $ is a complex Ginibre matrix. We consider a set of $ 5000 $ samples of $ \vb{M} $. Let $ z = 0.4i + 0.2 $. We study $\Omega:= ({[\mathbf{G}_{\mathbf{M}}(z)]_{11} - \alpha})/{\sqrt{\beta}} $, where $ \alpha $ is the mean of $ [\mathbf{G}_{\mathbf{M}}(z)]_{11} $ and $ \beta \approx 1.28$ is obtained by the method described in the previous remark using $ 3000 $ samples of $ \vb{M} $. We expect $\Omega$ to follow the theoretical density given in \eqref{t}, whose tail is given by ${1}/{G^2} $, and which is also expected to match the distribution of $ {1}/{\Omega} $.
 }    
 \label{veri}
\end{figure}

\subsection{On the Stieltjes transform}

At this stage, we are unable to formulate a conjecture on the distribution of $\s_{\vb{M}}$ introduced in Eq. \eqref{defs}, even if $\vb{M}$ is a Ginibre matrix. In this case, we showed in Eq. \eqref{student} that each diagonal entry of the resolvent of $\vb{M}$ follows a Student distribution with a tail given by $\frac{\Tilde{O}(z)}{G^2}$. Note that $\s_{\vb{M}}^N(z)$ is simply

\begin{equation}
    \frac{1}{\sqrt{N}} \sum_{k=1}^N ([\vb{G}_{\vb{M}}^N(z)]_{kk} - \overline{z}).
\end{equation}
Thus, we are summing $N$ Student distributions with a tail of $\frac{\pi\Tilde{O}(z)}{G^2} = \frac{1 - |z|^2}{G^2}$, which should yield a distribution with a tail of

\begin{equation}
    \frac{\rho_{\vb{M}}(z)\pi}{G^2} = \frac{1}{G^2}
\end{equation}
according to Eq. \eqref{tailcomp}. This necessarily implies that the Student distributions must be positively correlated in such a way that the amplitude of the tail passes from $ 1 - |z|^2 $ to $ 1 $. The reasoning can be adapted to the general case by Conjecture \ref{first_conjecture}.

In fact, based on numerical observations, we propose the following conjecture on $ \s_{M}^N(z) $, or rather $ \sqrt{\pi \rho_{\vb{M}}(z)} \s_{M}^N(z) $.

\begin{conjecture}
    Let $\vb{M}$ be a rotationally invariant random matrix of size $N$ with a well-behaved limiting spectral density. We define

    \begin{equation}\label{defZ}
        Z := \lim_{N \to +\infty} \sqrt{N} \frac{\left(\g^N_{\vb{M}}(z) - \g_{\vb{M}}(z)\right)}{\sqrt{\pi \rho_{\vb{M}}(z)}},
    \end{equation}
where $z$ lies within the spectrum, $\g_{\vb{M}}$ is the limiting Stieltjes transform of $\vb{M}$, and $\rho_{\vb{M}}$ represents the limiting spectral distribution. Then, the distribution of $Z$ is independent of $\vb{M}$ and $z$, and satisfies
    \begin{equation}\label{que}
        P(|Z| \geq r) \underset{r \to +\infty}{\sim} \frac{1}{r^2},
    \end{equation}
where $Z$ and ${\sqrt{2}}/{Z}$ follow the same distribution.
\end{conjecture}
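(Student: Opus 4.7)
The plan is to tackle the conjecture in three stages mirroring its three claims: universality (the limiting law depends on neither $\vb{M}$ nor $z$), the tail asymptotics, and the inversion symmetry $Z \stackrel{\mathcal{L}}{=} \sqrt{2}/Z$. Throughout, I would exploit the decomposition
\begin{equation}
    \g^N_{\vb{M}}(z) - \g_{\vb{M}}(z) = \frac{1}{N}\sum_{i=1}^{N}\left(\frac{1}{z - z_i} - \g_{\vb{M}}(z)\right),
\end{equation}
and split the right-hand side into a ``near'' contribution from eigenvalues within a shrinking disk $|z - z_i| \leq N^{-1/2+\delta}$ and a ``far'' contribution from the rest.

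For universality, I would argue that after rotationally invariant conjugation the bulk microscopic statistics of $\vb{M}$ at the point $z$ are governed by a single local parameter, namely the local density $\rho_{\vb{M}}(z)$. The normalization by $\sqrt{\pi \rho_{\vb{M}}(z)}$ in \eqref{defZ} should then collapse the local scaling onto the Ginibre limit, much as Conjecture \ref{first_conjecture} does for the diagonal entry after rescaling by $\sqrt{\pi \tilde{\mathcal{O}}(z)}$. Concretely, I would follow the heuristics of Section 4.3: the $i$th diagonal resolvent entry is, up to rescaling by $\sqrt{\pi\tilde{\mathcal{O}}(z)}$, a universal complex Student variable with $\nu=2$, and the trace is an (almost Poissonian) sum of such contributions, so universality of the normalized sum should reduce to universality of the local two-point correlations at $z$.

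For the tail \eqref{que}, I would repeat the single-dominant-eigenvalue argument leading to \eqref{firsttail} and \eqref{tailcomp}. Conditional on one eigenvalue $z_n$ being anomalously close to $z$, the contribution $N^{-1/2}/(z-z_n)$ dominates; since $N|z-z_n|^2$ converges to an exponential with rate $\pi\rho_{\vb{M}}(z)$, the rescaled variable $\sqrt{\pi\rho_{\vb{M}}(z)}\cdot N^{-1/2}(z-z_n)^{-1}$ has tail exactly $1/r^2$, matching \eqref{que}. The sub-dominant terms contribute a tight, Gaussian-like bulk that does not spoil the tail exponent.

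The hard part is the inversion symmetry $Z \stackrel{\mathcal{L}}{=} \sqrt{2}/Z$, especially the precise factor $\sqrt{2}$. My plan would be to first establish it in the Ginibre case by an exact characteristic-function or Laplace-transform computation, leveraging the same cavity/Gaussian-mixture representation used for Theorem \ref{theorem1} but now at the level of the trace rather than a single entry. Writing $\g^N_{\vb{M}}(z)$ as the sum of $N$ correlated Student variables whose joint law is controlled by the overlap matrix $\mathcal{O}$, I would try to match moments or cumulants with those of $\sqrt{2}/Z$. The factor $\sqrt{2}$ should emerge from the positive correlation between the diagonal resolvent entries noted at the end of the paper: the tail amplitude passes from $\pi\tilde{\mathcal{O}}(z)=1-|z|^2$ (for a single entry) to $\pi\rho_{\vb{M}}(z)=1$ (for the trace), and the same correlation structure doubles the effective ``inversion scale'' from $1$ to $2$. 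The main obstacle is precisely this correlation: unlike in the Hermitian Wigner case, where \eqref{dis} is Cauchy and the inversion symmetry is immediate, here one must control joint fluctuations of $N$ heavy-tailed variables; I expect one will have to pass through either a loop-equation for the two-point resolvent or a direct supersymmetric/replica calculation to close the argument analytically.
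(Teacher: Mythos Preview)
The paper does not prove this statement: it is presented explicitly as a conjecture ``based on numerical observations'', and the authors state just before Section~5 that they ``are unable to compute explicitly the limiting distribution $\g_{\vb{M}}^N$ for general non-Hermitian matrices or even for Ginibre matrices''. The only analytic support the paper offers is the heuristic nearest-eigenvalue argument for the tail (your second stage), which reproduces \eqref{tailcomp}; the universality claim and the inversion symmetry $Z\stackrel{\mathcal{L}}{=}\sqrt{2}/Z$ are supported solely by the numerics of Figure~\ref{verif2}. So there is no ``paper's own proof'' to compare against.

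Your plan is a reasonable sketch of how one might attack the open problem, but it is not a proof. The tail stage is fine and matches the paper's heuristic exactly. The universality stage, however, rests on an assumption the paper does not establish: that bulk microscopic statistics of a general rotationally invariant non-Hermitian ensemble coincide with Ginibre after rescaling by $\rho_{\vb{M}}(z)$. This is plausible but is itself a conjecture of comparable difficulty. For the inversion symmetry, your idea of extending the cavity representation of Theorem~\ref{theorem1} from a single diagonal entry to the full trace is natural, but the paper already signals the obstruction: the diagonal entries are correlated in a way that shifts the tail amplitude from $1-|z|^2$ to $1$, and no exact handle on that joint law is currently available. Your heuristic that ``the same correlation structure doubles the effective inversion scale from $1$ to $2$'' is not an argument---nothing in the single-entry analysis predicts the constant $\sqrt{2}$, and the paper makes no attempt to derive it. In short, stages one and three remain genuinely open, and your proposal identifies the right difficulties without resolving them.
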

\paragraph{Remarks}
\begin{enumerate}
    \item Note that $Z$ cannot follow a Student distribution, as no such distribution has a tail given by \eqref{que} while also satisfying this inversion symmetry.
    \item Figure \ref{verif2} presents encouraging results. Let's explain it. Since we do not know the distribution of $ Z $, we will refer to the \textbf{theoretical density} as that of $ \sqrt{N}(\g^N_{\vb{E}}(z) - \overline{z}) $, where $ \vb{E} $ is a Ginibre matrix. This density is calculated numerically using a large sample of Ginibre matrices. Next, we consider a model of non-Hermitian rotationnaly invariant matrices, denoted $ \vb{M} $, and examine the distribution $Z$ defined in Eq. \eqref{defZ}. For this, we first need to numerically estimate $ \g_{\vb{M}}(z) $ and $ \rho_{\vb{M}}(z) $, which can be done easily by considering a single realization of a large matrix $ \vb{M} $. We then hope that the distribution of $ Z$ will be predicted by the theoretical density, which seems to be supported by numerical simulations.
  
\end{enumerate}

\begin{figure}[H]
    \centering
    \begin{subfigure}{0.47\textwidth}
        \centering
        \includegraphics[width=\textwidth]{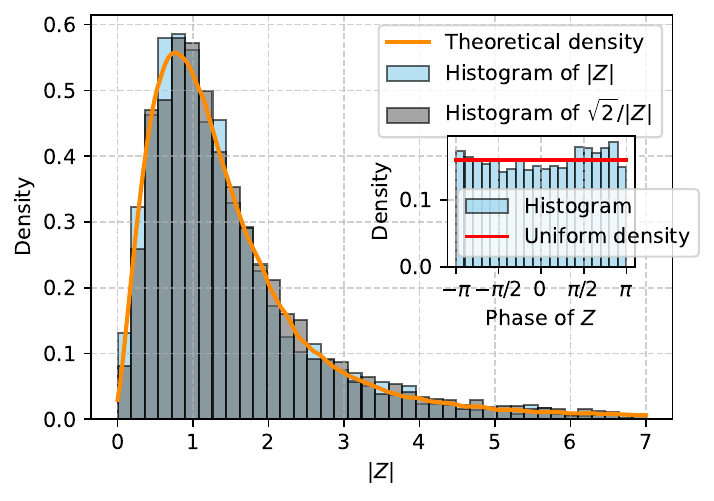}
        \caption{Distribution of $ Z $.}
    \end{subfigure}
    \hfill
    \begin{subfigure}{0.47\textwidth}
        \centering
        \includegraphics[width=\textwidth]{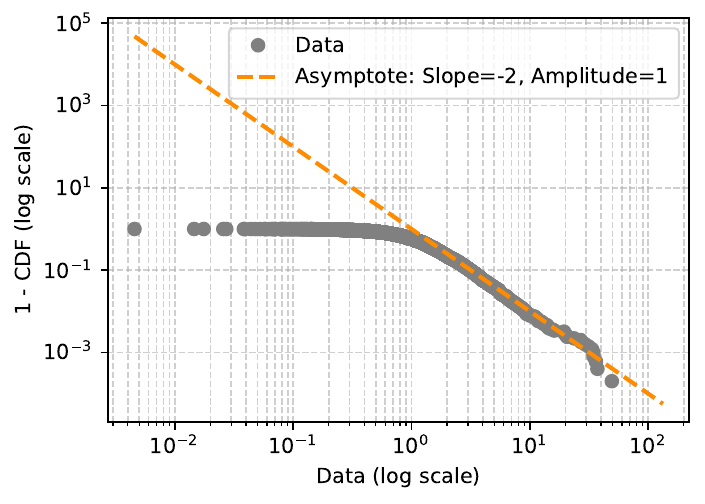}
        \caption{Log-log plot of the tail of $ Z $'s distribution.}
    \end{subfigure}
    \caption{We choose a non-trivial example by taking a matrix $ \vb{M} = \vb{A}\vb{B}\vb{C} $ of size $ 300 $, where $ \vb{A} $ is a GinUE matrix with parameter $ \tau = 0.5 $, $ \vb{B} $ is a unitary matrix randomly chosen according to the Haar measure, and $ \vb{C} $ is a complex Ginibre matrix. We consider a set of $ 5000 $ samples of $ \vb{M} $. Let $ z = 0.2i + 0.1 $. We study $ Z := \sqrt{N}({\g_{\mathbf{M}}^N(z) - \alpha})/{\sqrt{\beta}} $, where $ \alpha $ and $ \beta \approx 2.4$ are estimated as described in remark 2., where the theoretical density is also defined.
}
    \label{verif2}
\end{figure}

\newpage
\section{Summary}

We conclude the results section of this paper by presenting a summary table that provides an overview of both established results for the symmetric case and conjectures for the non-Hermitian case.

\begin{table}[h!]
    \centering
    \setlength{\tabcolsep}{10pt}
    \renewcommand{\arraystretch}{2}
    \begin{tabular}{|c|c|c|}
        \hline
        & \textbf{ \makecell{Diagonal coefficient\\of resolvent}} & \textbf{\makecell{Normalized trace \\of resolvent}} \\ 
        \hline
        \makecell{\textbf{Symmetric Case}\\$\vb{A}$: a rotationally invariant matrix\\
        $x\in \R$ inside the limiting spectrum\\
        $\vb{G}_{\vb{A}}^N(x)$: the matrix $(x1 - \vb{A})^{-1}$ \\
        $\g_{\vb{A}}^N$: the normalized trace of $\vb{G}_{\vb{A}}^N$
        \\        
        $\rho_{\vb{A}}(x)$: the limiting spectral distribution\\
        $\h_{\vb{A}}(x)$: the Hilbert-transform of $\rho_{\vb{A}}(x)$\\                
        }        
         & 
        \makecell[l]{\\[1.5ex]
        $\displaystyle \frac{[\vb{G}_{\vb{A}}^N(x)]_{11} - \h_{\vb{A}}(x)}{\pi \rho_{\vb{A}}(x)}\xrightarrow{\mathcal{L}} X$ \\[1.5ex] 
        whose p.d.f is $\displaystyle \frac{1}{\pi} \frac{1}{1 + x^2}$ \\[1.5ex] 
        \textbf{Symmetry:} $X \sim \frac{1}{X}$} 
        & 
        \makecell[l]{\\[1.5ex]
        $\displaystyle \frac{\g_{\vb{A}}^N(x) - \h_{\vb{A}}(x)}{\pi \rho_{\vb{A}}(x)}\xrightarrow{\mathcal{L}} X$ \\[1.5ex]
        whose p.d.f is $\displaystyle \frac{1}{\pi} \frac{1}{1 + x^2}$ \\[1.5ex]
        \textbf{Symmetry:} $X \sim \frac{1}{X}$} 
        \\ 
        \hline
        \makecell{\textbf{Non-Hermitian case}\\$\vb{M}$: a rotationally invariant matrix\\
        $z\in \C$ inside the limiting spectrum\\
        $\vb{G}_{\vb{M}}^N(x)$: the matrix $(z1 - \vb{M})^{-1}$\\
        $\g_{\vb{M}}^N$: the normalized trace of $\vb{G}_{\vb{M}}^N$\\
        $\g_{\vb{M}}$: the limit of $\g_{\vb{M}}^N$\\
        $\rho_{\vb{M}}(z)$: the limiting spectral distribution\\  
        $\Tilde{O}_{\vb{M}}(z)$: the rescaled self-overlap
        }
         & 
        \makecell[l]{\\[1.5ex]
        $\displaystyle \frac{[\vb{G}_{\vb{M}}^N(z)]_{11} - \g_{\vb{M}}(z)}{\sqrt{\pi \tilde{O}_{\vb{M}}(z)}} \xrightarrow{\mathcal{L}} \Omega$ \\[1.5ex]
        whose p.d.f is $\displaystyle \frac{1}{\pi} \frac{1}{(1 + |\omega|^2)^2}$ \\[1.5ex]
        \textbf{Symmetry:} $\Omega \sim \frac{1}{\Omega}$} 
        & 
        \makecell[l]{\\[1.5ex]
        $\displaystyle \sqrt{N} \frac{\g_{\vb{M}}^N(z) - \g_{\vb{M}}(z)}{\sqrt{\pi \rho_{\vb{M}}(z)}} \xrightarrow{\mathcal{L}} Z$ \\[1.5ex]
        $P(|Z|\geq r) \underset{r\to +\infty}{\sim } \frac{1}{r^2}$
        \\[1.5ex]        
        \textbf{Symmetry:} $Z \sim \frac{\sqrt{2}}{Z}$} 
        \\ 
        \hline
    \end{tabular}
    \caption{Summary table.}
    \label{tab:recap}
\end{table}

The first row highlights known results, previously introduced and discussed in the introductory section. The second row presents conjectures specific to the non-Hermitian case. Among these, only the distribution of the diagonal coefficient of the resolvent where $ \vb{M} $ is a Ginibre matrix has been rigorously demonstrated within this work. The remaining conjectures still need to be formally proven, and the distribution of $Z$ also remains to be found, offering promising directions for further research, in particular concerning its universality. 

\subsection*{Acknowledgments} We want to warmly thank Yan Fyodorov for discussions and encouragements. 
This research was conducted within the Econophysics
\& Complex Systems Research Chair, under the aegis of
the Fondation du Risque, the Fondation de l’Ecole polytechnique, the Ecole polytechnique and Capital Fund
Management.

\newpage

\section{APPENDIX}

\subsection{Reminders}
The characteristic function of a random complex variable $ Z $ is defined as

\begin{equation}
    \phi_X(\omega) = \mathbb{E}\left(\exp(i \Re(\overline{\omega} Z))\right)
\end{equation}
with $ \omega \in \mathbb{C} $.

According to Levy's theorem for complex random variables, if $ Z_n $ is a sequence of complex-valued random variables converging in distribution to a complex random variable $ Z $, then this is equivalent to the pointwise convergence of their characteristic functions, i.e., if and only if $ \phi_{Z_n}(t) $ converges to $ \phi_Z(t) $ for all $ t $.

We recall that the probability density function for an $ N $-dimensional complex Gaussian variable $ \vb{Z} $ is

\begin{equation}
    \frac{1}{\pi^N \det(\mathbf{C})} \exp\left(-(\boldsymbol{\psi} - \boldsymbol{\mu})^* \mathbf{C}^{-1} (\boldsymbol{\psi} - \boldsymbol{\mu})\right),
\end{equation}
where $ \boldsymbol{\mu} \in \mathbb{C}^N $ is the mean vector and $ \mathbf{C} $ is the covariance matrix.

The following formula will be useful. If $ \mathbf{C} $ is a Hermitian positive definite matrix and $ \boldsymbol{\mu} \in \mathbb{C}^N $, then

\begin{equation}\label{formuladet}
    \frac{1}{\pi^N} \int_{\boldsymbol{\psi} \in \mathbb{C}^N} \exp\left(-(\boldsymbol{\psi} - \boldsymbol{\mu})^* \mathbf{C}^{-1} (\boldsymbol{\psi} - \boldsymbol{\mu})\right) d\boldsymbol{\psi} = \det(\mathbf{C}),
\end{equation}
where

\begin{equation}
    d\boldsymbol{\psi} := \bigwedge_{k=1}^N d \Re(\boldsymbol{\psi}_k) \wedge d \Im(\boldsymbol{\psi}_k) = \bigwedge_{k=1}^N \frac{i \, d \boldsymbol{\psi}_k \wedge d \overline{\boldsymbol{\psi}_k}}{2}.
\end{equation}

For a one-dimensional complex Gaussian variable $ Z $ with mean $ \mu $ and variance $ \sigma^2 $, the probability density function can be expressed as

\begin{equation}
    \frac{1}{\pi \sigma^2} \exp\left(-\frac{|z - \mu|^2}{\sigma^2}\right).
\end{equation}
We can represent $ Z $ as $ X + iY $, where $ X $ and $ Y $ are two independent real Gaussian variables. $ X $ and $ Y $ have means $ \Re(\mu) $ and $ \Im(\mu) $, respectively, and both have a variance equal to $ \frac{\sigma^2}{2} $. It is a well-known fact that the characteristic function of $ Z $ is

\begin{equation}
    \phi_Z(\omega) = \exp\left(-\frac{|\omega - \mu|^2 \sigma^2}{4}\right)
\end{equation}
with $ \omega \in \mathbb{C} $.

\subsection{Proof of theorem \ref{theorem1}}
We use the Schur complement formula to compute the $(1, 1)$ element of $[\vb{G}_{\vb{M}}(z)]_{11}$ as done on page 21 in \cite{potters2020first}:

\begin{equation}
    \frac{1}{[\vb{G}_{\vb{M}}]_{11}} = z - \vb{M}_{1 1}- \vb{A}
\end{equation}
with
\begin{equation}
    \vb{A} = \sum_{k, l = 2}^N \vb{M}_{1 k} [(z\vb{1} - \vb{F})^{-1}]_{kl} \vb{M}_{l 1}
\end{equation}
where the matrix $\vb{F}$ is the $(N-1)\times (N-1)$ submatrix of $\vb{M}$ with the first row and column removed.

Using the independence of the entries of $\vb{M}$, we can write

\begin{align}
    \phi_{\frac{1}{[\vb{G}_{\vb{M}}]_{11}}}^N(\omega) =& \E\left[\exp\left(i\Re\left(\overline{\omega}\left(z - \vb{M}_{1 1}- \vb{A}\right)\right)\right)\right] \\
    =& \exp\left(i \Re(\overline{\omega} z)\right) \exp\left(-\frac{|\omega|^2}{4N}\right)\phi_{\vb{A}}^N(\omega).\label{finiteN}
\end{align}
Given $\vb{F}, (\vb{M}_{l, 1})_{l\geq 2}$, $\vb{A}$ is just a sum of the independent Gaussian variables $\vb{M}_{1, k}$. Thus, 

\begin{align}\label{conditionning}
    \E\left[\exp\left(i\Re(\overline{\omega} \vb{A})\right) \big| \vb{F}, (\vb{M}_{l, 1})_{l\geq 2}\right] =& \exp\left(-\frac{|\omega|^2}{4N} \sum_{k=2}^N \left\lvert \sum_{l=2}^N [(z\vb{1} - \vb{F})^{-1}]_{kl} \vb{M}_{l 1}\right\rvert^2 \right)\\
    =& \exp\left(-\frac{|\omega|^2}{4N} \sum_{k=2}^N \left\lvert [\vb{E}\vb{x}]_k\right\rvert^2 \right)
\end{align}
where $\vb{x}$ is the $N-1$-dimensional vector $(\vb{M}_{l, 1})_{l\geq 2}$ and $\vb{E}$ denotes the $(N-1)\times (N-1)$ matrix $(z\vb{1} - \vb{F})^{-1}$.
Then, 

\begin{align}
    \phi_{\vb{A}}^N(\omega) &= \E\left(\exp\left(i\Re(\overline{\omega} \vb{A})\right)\right)\\
    &= \E\left[\E\left[\exp\left(i\Re(\overline{\omega} \vb{A})\right) \big| \vb{F}, (\vb{M}_{l, 1})_{l\geq 2}\right]\right]\\
    &= \E\left[ \exp\left(-\frac{|\omega|^2}{4N} \sum_{k=2}^N \left\lvert [\vb{E}\vb{x}]_k\right\rvert^2 \right)\right]\\
    &= \E\left(\exp\left(-\frac{|\omega|^2}{4N} (\vb{E}\vb{x})^* (\vb{E}\vb{x})\right)\right)\\
    &= \E\left[\exp\left(-\frac{|\omega|^2}{4N} \vb{x}^* \vb{E}^*\vb{E}\vb{x}\right)\right].
\end{align}
Since $\vb{x}$ is an $(N-1)$-dimensional complex Gaussian vector with mean $0$ and variance $\frac{1}{N}$, we can write the previous equation as 

\begin{align}
    \phi_{\vb{A}}^N(\omega) =& \frac{N^{N-1}}{\pi^{N-1}} \int_{\vb{x}\in \C^{N-1}}\exp\left(-\frac{|\omega|^2}{4N} \vb{x}^* \vb{E}^*\vb{E}\vb{x}\right)\exp\left(-N\vb{x}^*I_{N-1}\vb{x}\right)d\vb{x}
\end{align}
and by using Eq. \eqref{formuladet}, we get

\begin{equation}
    \phi_{\vb{A}}(\omega) = N^{N-1}\E\left[\frac{1}{\det\left(NI_{N-1} + \frac{|\omega|^2}{4N}\vb{E}^*\vb{E}\right)}\right].
\end{equation}
By denoting $\vb{B} = (z\vb{1} - \vb{F}) = \vb{E}^{-1}$, we have

\begin{align}
    \phi_{\vb{A}}^N(\omega) &= N^{N-1}\E\left[\frac{1}{\det\left(N + \frac{|\omega|^2}{4N}(\vb{B}^{-1})^*\vb{B}^{-1}\right)}\right]\\
    &= N^{N-1}\E\left[\frac{1}{\det\left(N + \frac{|\omega|^2}{4N}(\vb{B}\vb{B}^*)^{-1}\right)}\right]\\
    &= N^{N-1} \E\left[\frac{\det(\vb{B}\vb{B}^*)}{\det\left(N\vb{B}\vb{B}^* + \frac{|\omega|^2}{4N}\right)}\right]\\
    &= \E\left[\frac{\det(\vb{B}\vb{B}^*)}{\det\left(\vb{B}\vb{B}^* + \frac{|\omega|^2}{4N^2}\right)}\right]\\
    &= \E\left[\frac{\det{(zI_{N-1} - \vb{F})(zI_{N-1} - \vb{F})^*}}{\det \left((zI_{N-1} - \vb{F})(zI_{N-1} - \vb{F})^*\right) + \frac{|\omega|^2}{4N^2}}\right].\label{back}
\end{align}

Now, we introduce the quantity

\begin{equation}
    \mathcal{D}_{N}^{(L)}(z, p) = \E\left[\frac{\det^{L} {(zI_{N} - \vb{P})(zI_N - \vb{P})^*}}{\det \left(pI_N + (zI_N - \vb{P})(zI_N - \vb{P})^*\right)}\right]_{\vb{P}}
\end{equation}
where $L \in \mathbb{N}$, $p \geq 0$, and the expected value is taken over $\vb{P}$, an $N\times N$ matrix with i.i.d. complex Gaussian entries with mean $0$ and variance $1$.
This quantity was considered in \cite{grela2016exact}, and the exact expression was found in \cite{fyodorov2018statistics} for $L = 1$:

\begin{multline}
    \mathcal{D}_{N}^{(1)}(z, p) = \frac{1}{(N-1)!}e^{|z|^2} \int_0^{+\infty} \frac{dt}{t(1+t)}e^{-pt} e^{\frac{-|z|^2}{\left(1 + \frac{1}{t}\right)}} \left(\frac{t}{1 + t}\right)^N\\
    \left(\Gamma(N+1, |z|^2) - \frac{t}{1+t}|z|^2 \Gamma(N, |z|^2)\right)
\end{multline}
where $\Gamma$ is the incomplete Gamma function defined as 

\begin{equation}
    \Gamma(s, x) = \int_x^{+\infty} t^{s-1}e^{-t}dt.
\end{equation}

Returning to Eq.\eqref{back}, we get

\begin{equation}
    \phi_{\vb{A}}^N(\omega) = \frac{1}{N^N} \mathcal{D}_{N-1}^{(1)}\left(\sqrt{N}z, \frac{|\omega|^2}{4N}\right)
\end{equation}
and so,

\begin{multline}
    \phi_{\vb{A}}^N(\omega) = \frac{1}{(N-2)!}e^{N|z|^2} \int_0^{+\infty} \frac{dt}{t(1+t)}e^{-\frac{|\omega|^2 t}{4N}} e^{\frac{-N|z|^2}{\left(1 + \frac{1}{t}\right)}} \left(\frac{t}{1 + t}\right)^{N-1}\\
    \left(\Gamma(N, N|z|^2) - \frac{tN}{1+t}|z|^2 \Gamma(N-1, N|z|^2)\right)dt.
\end{multline}
Inserting the previous formula in Eq.\eqref{finiteN} yields

\begin{multline}    
    \phi_{\frac{1}{[\vb{G}_{\vb{M}}]_{11}}}^N(\omega) = \exp\left(i \Re(\overline{\omega} z)\right) \exp\left(-\frac{|\omega|^2}{4N}\right)\frac{1}{(N-2)!}e^{N|z|^2} \int_0^{+\infty} \frac{dt}{t(1+t)}e^{-\frac{|\omega|^2 t}{4N}} e^{\frac{-N|z|^2}{\left(1 + \frac{1}{t}\right)}} \left(\frac{t}{1 + t}\right)^{N-1}\\
    \left(\Gamma(N, N|z|^2) - \frac{tN}{1+t}|z|^2 \Gamma(N-1, N|z|^2)\right)dt.
\end{multline}
Using the following recurrence relation satisfied by the incomplete gamma function,
\begin{equation}
    \Gamma(s + 1, x) = s\Gamma(s, x) + x^s e^{-x}
\end{equation}
and recalling the characteristic function of a complex Gaussian variable, we then complete the proof of Theorem \ref{theorem1}.

\subsection{Proof of Theorem \ref{main}}

We continue to consider the variable $ t $ from Theorem \ref{theorem1}.

\subsubsection{Regime (1)}

Let $ h $ be a bounded continuous function. Equation \ref{sigma} yields

\begin{multline}
    \E\left[h\left(\frac{t}{N g(N)}\right)\right] = \frac{1}{(N-2)!}\int_0^{+\infty}\frac{h\left(\frac{t}{N g(N)}\right)dt}{t(1+t)} e^{\frac{-N(1 - g(N))}{\left(1 + \frac{1}{t}\right)}} \left(\frac{t}{1 + t}\right)^{N-1}
    \\\left(e^{N(1-g(N))}\Gamma(N-1, N(1 - g(N)))\left(N-1 - \frac{tN}{1+t}(1 - g(N))\right)  +\left(N(1 - g(N))\right)^{N-1}\right).
\end{multline}
And so, by using a change of variables \( u = \frac{t}{N g(N)} \), the integral becomes:

\begin{multline}
    \E\left[h\left(\frac{t}{N g(N)}\right)\right] = \frac{1}{(N-2)!}\int_0^{+\infty}\frac{h(u)du}{u(1+Ng(N)u)} e^{\frac{-N(1 - g(N))}{\left(1 + \frac{1}{Ng(N)u}\right)}} \left(\frac{Ng(N)u}{1 + Ng(N)u}\right)^{N-1}
    \\\left(e^{N(1-g(N))}\Gamma(N-1, N(1 - g(N)))\left(N-1 - \frac{uN^2 g(N)}{1+Ng(N)u}(1 - g(N))\right)  +\left(N(1 - g(N))\right)^{N-1}\right).
\end{multline}
Using an asymptotic result on the $\Gamma$ function, which is
\begin{equation}
   \frac{\Gamma(N-1, N(1 - g(N)))}{(N-2)!} \underset{N\to \infty}{\longrightarrow} 1, 
\end{equation}
when $ \frac{1}{N} \ll g(N) $ as well as the limit

\begin{equation}
    \left(\frac{Ng(N)u}{1 + Ng(N)u}\right)^{N-1}e^{N(1 - g(N))}  e^{\frac{-N(1 - g(N))}{\left(1 + \frac{1}{Ng(N)u}\right)}} \underset{N\to \infty}{\longrightarrow} e^{-\frac{1}{u}},
\end{equation}
the dominated convergence theorem allows us to conclude that the density function of $ \frac{t}{Ng(N)} $ will converge to

\begin{equation}\label{density}
   u\mapsto\frac{1}{u^2}e^{-\frac{1}{u}}.
\end{equation}

Then, Theorem \ref{theorem1} tells us that $ \frac{1}{\sqrt{g(N)}}\left(\frac{1}{[\vb{G}_{\vb{M}}]_{11}} - z\right) $ will converge in distribution to the random variable $ \sqrt{V} X $, where $ V $ is an inverse Gamma-distributed variable whose probability density function is given by Eq.\eqref{density} and $ X $ is a standard complex Gaussian variable independent of $ V $. In conclusion, Lemma \ref{gamma} shows that the distribution of $ \sqrt{V} X $ is the complex Student variable distributed as 

\begin{equation}\label{den}
    \frac{1}{\pi}\frac{1}{(1 + |\omega|^2)^2}.
\end{equation}
Hence the result \eqref{regime2} using Lemma \ref{inverse_distrib}.

Let's look at the proof of the result \eqref{student}. If $ g(N) = 1 - |z|^2 $ is fixed, we have just seen that $ \frac{1}{\sqrt{1 - |z|^2}}\left(\frac{1}{[\vb{G}_{\vb{M}}]_{11}} - z\right) $ converges to a Student variable whose density function is Eq.\eqref{den}. In other words, $ \frac{1}{[\vb{G}_{\vb{M}}]_{11}} - z $ converges in law to the Student variable distributed as

\begin{equation}
    \frac{1}{\pi}\frac{1 - |z|^2}{(1 - |z|^2 + |\omega|^2)^2}.
\end{equation}
Lemma \ref{inverse} then allows us to determine the distribution of $ [\vb{G}_{\vb{M}}]_{11} $ as given by Theorem \ref{theorem1}.

\subsubsection{Regime (2)}

Let $ h $ be a bounded continuous function. Using \eqref{sigma}, we have

\begin{multline}
    \E\left[h\left(\frac{t}{\sqrt{N}}\right)\right] =\frac{1}{(N-2)!}\int_0^{+\infty} \frac{h(u)du}{u(1+u\sqrt{N})}e^{\frac{-N(1 + \epsilon(N))}{\left(1 + \frac{1}{\sqrt{N}t}\right)}} \left(\frac{u\sqrt{N}}{1 + u\sqrt{N}}\right)^{N-1}\\\left(e^{N(1 + \epsilon(N))} \Gamma(N-1, N(1 + \epsilon(N)))\left(N-1 - \frac{uN^{3/2}}{1+u\sqrt{N}}(1 + \epsilon(N))\right)  +\left(N(1 +\epsilon(N))\right)^{N-1}\right).
\end{multline}
Classical asymptotic results such as

\begin{equation}
    \sqrt{N}(N-2)! \sim \sqrt{2\pi N}\left(\frac{N-2}{e}\right)^{N-2}\sqrt{N} \sim \sqrt{2\pi} e^{-N} N^{N-1}
\end{equation}
and

\begin{equation}
    \sqrt{N}\Gamma(N-1, N(1 + \epsilon(N))\sim \frac{\sqrt{2\pi}}{2} e^{-N}N^{N-1}
\end{equation}
with $ \epsilon(N) \ll \frac{1}{\sqrt{N}} $ lead us to

\begin{equation}
    \E\left[h\left(\frac{t}{\sqrt{N}}\right)\right] \underset{N\to \infty}{\longrightarrow} \int_0^{+\infty} \frac{h(u)}{u^2}e^{-\frac{1}{2u^2}}\left(\frac{1}{2u} + \frac{1}{\sqrt{2\pi}}\right).
\end{equation}
Hence the result.

\subsubsection{Regime (3)}

Let $ h $ be a bounded continuous function. Equation \ref{sigma} yields

\begin{multline}
    \E\left[h\left(\frac{t}{\sqrt{N}}\right)\right] = \frac{1}{(N-2)!} \int_0^{+\infty} \frac{h(u)du}{u(1+\sqrt{N}u)} e^{\frac{-N\left(1 + \frac{\alpha}{\sqrt{N}}\right)}{\left(1 + \frac{1}{\sqrt{N}u}\right)}} \left(\frac{\sqrt{N}u}{1 + \sqrt{N}u}\right)^{N-1}\\
    \left[e^{N\left(1 + \frac{\alpha}{\sqrt{N}}\right)}\Gamma\left(N-1, N\left(1 + \frac{\alpha}{\sqrt{N}}\right)\right)\left(N-1 - \frac{uN^{3/2}}{1+u\sqrt{N}}\left(1 + \frac{\alpha}{\sqrt{N}}\right)\right)  +\left(N\left(1 + \frac{\alpha}{\sqrt{N}}\right)\right)^{N-1}\right].
\end{multline}
Using this time that

\begin{equation}
    \frac{\Gamma\left(N-1, N\left(1 + \frac{\alpha}{\sqrt{N}}\right)\right)}{(N-2)!}\underset{N\to \infty}{\longrightarrow} \frac{1}{\sqrt{2\pi}}\int_{\alpha}^{+\infty} e^{-\frac{v^2}{2}}dv,
\end{equation}
we get

\begin{equation}
    \E\left[h\left(\frac{t}{\sqrt{N}}\right)\right] \underset{N\to \infty}{\longrightarrow}  \int_0^{+\infty}\frac{h(u)}{u^2}e^{-\frac{1}{2u^2} + \frac{\alpha}{u}}\left(\frac{1 - \alpha u}{u \sqrt{2\pi}}\int_{\alpha}^{+\infty}e^{-v^2/2}dv  + \frac{e^{-\frac{\alpha^2}{2}}}{\sqrt{2\pi}}\right)du.
\end{equation}
Hence the result.

\subsubsection{Regime (4)}

Let $ h $ be a bounded continuous function. Equation \eqref{sigma} gives

\begin{multline}
    \E\left[h\left(f(N)t\right)\right] = \frac{1}{(N-2)!} \int_0^{+\infty} \frac{h(u)du}{u(1+u/f(N))} e^{\frac{-N(1 + f(N))}{\left(1 + \frac{f(N)}{u}\right)}} \left(\frac{1}{1 + f(N)/u}\right)^{N-1}\\\left(e^{N(1 + f(N))}\Gamma(N-1, N(1 + f(N)))\left(N-1 - \frac{Nu/f(N)}{1+u/f(N)}(1 + f(N))\right)  +\left(N(1 + f(N))\right)^{N-1}\right).
\end{multline}
Using Laplace's method, since $ Nf(N)^2 $ tends to infinity, we find that

\begin{equation}
    \E\left(h(f(N)t\right) \underset{N\to \infty}{\longrightarrow} h(u^*)
\end{equation}
where $ u^* $ maximizes either the function $ u\mapsto -\frac{1}{u} + \frac{1}{2u^2} $ if $ f(N)\underset{N\to \infty}{\longrightarrow} 0 $ or the function $ u\mapsto -\frac{1 + l}{1 + \frac{l}{u}} + \ln\left(1 + \frac{l}{u}\right) $ if $ f(N)\underset{N\to \infty}{\longrightarrow} l $ with $ l > 0 $. In these two cases, $ u^* = 1 $.

We can then conclude using Theorem \ref{main} (noting that the variance in Eq. \eqref{not} is $ \frac{t + 1}{N} $).

\subsection{Lemmas}

\begin{lemma}\label{inverse_distrib}
Let $(X_n)$ be a sequence of random variables, $z \in \mathbb{C}^*$, and $(u_n)$ a complex sequence. We assume that

\begin{equation}
    u_n \underset{n \to \infty}{\longrightarrow} +\infty.
\end{equation}
and

\begin{equation}
    u_n\left(\frac{1}{X_n} - z\right)\xrightarrow{\mathcal{L}} X.
\end{equation}
Then,

\begin{equation}
    z^2 u_n\left(\frac{1}{z} - X_n\right)\xrightarrow{\mathcal{L}} X.
\end{equation}
\end{lemma}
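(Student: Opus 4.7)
The plan is to reduce the claim to a direct application of Slutsky's theorem after a short algebraic manipulation. First I would write the simple identity
\begin{equation}
    \frac{1}{X_n} - z = \frac{1 - z X_n}{X_n} = \frac{z}{X_n}\left(\frac{1}{z} - X_n\right),
\end{equation}
which, after multiplication by $u_n$ and rearrangement, yields
\begin{equation}
    z^2 u_n\left(\frac{1}{z} - X_n\right) = z X_n \cdot u_n\left(\frac{1}{X_n} - z\right).
\end{equation}
This reformulates the target quantity as the product of the already-converging sequence $u_n(1/X_n - z)$ by the prefactor $z X_n$.

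Next I would argue that $X_n \to 1/z$ in probability. By hypothesis $u_n(1/X_n - z)$ converges in distribution to $X$, and since $|u_n|\to +\infty$, dividing by $u_n$ forces $1/X_n - z \to 0$ in probability (a convergent-in-law sequence divided by a deterministic sequence tending to infinity goes to $0$ in probability). Because $z \neq 0$, the continuous mapping theorem applied to the map $w \mapsto 1/w$ (continuous in a neighborhood of $z$) then gives $X_n \xrightarrow{\mathbb{P}} 1/z$, and consequently $z X_n \xrightarrow{\mathbb{P}} 1$.

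Finally, Slutsky's theorem combines the convergence in distribution $u_n(1/X_n - z) \xrightarrow{\mathcal{L}} X$ with the convergence in probability $z X_n \xrightarrow{\mathbb{P}} 1$ to conclude that
\begin{equation}
    z X_n \cdot u_n\left(\frac{1}{X_n} - z\right) \xrightarrow{\mathcal{L}} 1 \cdot X = X,
\end{equation}
which by the identity above is exactly the statement of the lemma. I do not anticipate a real obstacle: the only subtle point is justifying the passage from convergence in law of $u_n(1/X_n - z)$ to convergence in probability of $1/X_n$ toward $z$, but this is a standard consequence of $|u_n|\to \infty$ together with the tightness of the limiting law, so the argument is essentially a one-line application of Slutsky after the algebraic identity.
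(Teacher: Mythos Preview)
Your proposal is correct and follows essentially the same route as the paper: the same algebraic identity $z^2 u_n(1/z - X_n) = z X_n \cdot u_n(1/X_n - z)$, the same reduction to showing $zX_n \xrightarrow{\mathbb{P}} 1$, and the same appeal to Slutsky's theorem. Your justification of $X_n \xrightarrow{\mathbb{P}} 1/z$ via tightness plus the continuous mapping theorem is in fact a slightly more streamlined version of the paper's explicit $\epsilon$--$\delta$ argument.
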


\begin{proof}
    We can write

    \begin{equation}
        z^2 u_n\left(\frac{1}{z} - X_n\right) = z X_n u_n \left( \frac{1}{X_n} - z \right).
    \end{equation}
Using Slutsky's theorem, it suffices to show that $z X_n$ converges in probability to $1$.
Let's fix $\epsilon > 0$, $\delta > 0$, and $A>0$ sufficiently large such that

    \begin{equation}
        P(|X| \geq A) \leq \delta.
    \end{equation}
Due to the convergence in distribution, for sufficiently large $n$,

    \begin{equation}
        P\left(u_n \left|\frac{1}{X_n} - z\right| \leq A\right) \leq 2\delta.
    \end{equation}
Since $u_n \to \infty$, there exists $n$ large enough such that $\epsilon u_n |z| \geq A$. Thus, for sufficiently large $n$,

    \begin{equation}
        P\left(\left|\frac{1}{zX_n} - 1\right| \geq \epsilon\right) = P\left(\left|u_n \left(\frac{1}{X_n} - z\right)\right| \geq \epsilon |u_n| |z|\right) \leq P\left(u_n \left|\frac{1}{X_n} - z\right| \leq A\right) \leq 2\delta.
    \end{equation}
Therefore, $\frac{1}{zX_n}$ converges in probability to $1$ and we obtain the result by the continuity of the inverse function.
\end{proof}

\begin{lemma}\label{gamma}
    Let $Z$ be a complex Gaussian variable with mean $0$ and variance $\sigma^2$. Let's assume that $t = \sigma^2$ is an inverse gamma variable distributed as 

\begin{equation}
     t\in \R^{+}\mapsto \frac{\beta^{\nu}}{\Gamma(\nu)} \frac{1}{t^{\nu + 1}}\exp{- \frac{\beta}{t}}
\end{equation}
with $\nu, \beta>0$. The distribution induced on $Z$ is then given by

\begin{equation}
    \frac{1}{\pi} \frac{\beta^{\nu} \nu}{(\beta + |\omega|^2)^{\nu+1}}
\end{equation}
which is a complex Student variable with $\nu$ and $\beta$ as parameters.

\end{lemma}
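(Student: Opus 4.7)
The plan is to compute the marginal density of $Z$ directly by integrating the conditional density of $Z$ given $t$ against the density of $t$. Conditional on $\sigma^2 = t$, the variable $Z$ is a centered complex Gaussian, so its density is $\frac{1}{\pi t}\exp(-|\omega|^2/t)$. Multiplying by the stated inverse Gamma density for $t$ and integrating out $t$ gives
\begin{equation}
    f_Z(\omega) = \frac{\beta^{\nu}}{\pi\,\Gamma(\nu)} \int_0^{+\infty} \frac{1}{t^{\nu+2}} \exp\!\left(-\frac{\beta+|\omega|^2}{t}\right) dt.
\end{equation}

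The key step is to evaluate this integral. I would perform the change of variables $u = (\beta+|\omega|^2)/t$, so that $dt = -(\beta+|\omega|^2)/u^2\, du$ and $t^{-(\nu+2)} = u^{\nu+2}/(\beta+|\omega|^2)^{\nu+2}$. The integral collapses to
\begin{equation}
    \int_0^{+\infty} \frac{1}{t^{\nu+2}} \exp\!\left(-\frac{\beta+|\omega|^2}{t}\right) dt = \frac{1}{(\beta+|\omega|^2)^{\nu+1}} \int_0^{+\infty} u^{\nu} e^{-u}\,du = \frac{\Gamma(\nu+1)}{(\beta+|\omega|^2)^{\nu+1}}.
\end{equation}

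Substituting back and using $\Gamma(\nu+1) = \nu\,\Gamma(\nu)$ gives exactly the claimed density $\frac{1}{\pi}\,\beta^{\nu}\nu/(\beta+|\omega|^2)^{\nu+1}$. There is no real obstacle here: the entire argument is a one-line mixture calculation, and the only delicate point is making sure that Fubini applies to justify the exchange of the expectation over $t$ with the integration defining the marginal density of $Z$, which is immediate since the joint integrand is nonnegative.
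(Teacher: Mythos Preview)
Your proof is correct and follows essentially the same approach as the paper: write the marginal density as the mixture integral, perform the change of variables $u=(\beta+|\omega|^2)/t$ to reduce to a Gamma integral, and simplify via $\Gamma(\nu+1)=\nu\,\Gamma(\nu)$. The only addition you make beyond the paper is the explicit remark that Fubini/Tonelli justifies the interchange, which is a nice touch.
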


\begin{proof}

The distribution of $Z$ conditionning on $\sigma$ is

\begin{equation}
    f_Z(\omega) = \frac{1}{\pi \sigma^2} \exp{\frac{-|\omega|^2}{\sigma^2}}.
\end{equation}
and so the probabilty density of $Z$ is given by

\begin{equation}
    \int_0^{+\infty} \frac{\beta^{\nu}}{\Gamma(\nu)} \frac{1}{v^{\nu + 1}}\exp{- \frac{\beta}{t}}  \frac{1}{\pi t} \exp{\frac{-|\omega|^2}{t}}dt
    = \int_0^{+\infty} \frac{\beta^{\nu}}{\pi \Gamma(\nu)} \frac{1}{t^{\nu + 2}}\exp{- \frac{|\omega|^2 + \beta}{t}} dt
\end{equation}
and by making the change of variables $t\to \frac{|\omega|^2 + \beta}{t}$, we obtain 

\begin{align}
    \int_0^{+\infty} \frac{\beta^{\nu}}{\pi \Gamma(\nu)} \frac{v^{\nu}}{(\beta + |\omega|^2)^{\nu+1}} \exp{- t} dt&= \frac{\beta^{\nu}}{\pi }  \frac{1}{(\beta + |\omega|^2)^{\nu+1}} \frac{\Gamma(\nu+1)}{\Gamma(\nu)}\\
&=\frac{1}{\pi} \frac{\beta^{\nu} \nu}{(\beta + |\omega|^2)^{\nu+1}}.
\end{align}
\end{proof}

\begin{lemma}\label{inverse}
    Let $Z$ be a complex Student variable with parameters $\nu = 2$, $\beta$ and $c$ mean such that the distribution of $Z$ is

    \begin{equation}
        \frac{1}{\pi} \frac{\beta}{(\beta + |\omega - c|^2)^{2}}.
    \end{equation}
Then, $\overline{\frac{1}{Z}}$ is distributed as 

    \begin{equation}
    \frac{1}{\pi} \frac{\frac{\beta}{(\beta + |c|^2)^2}}{\left(\frac{\beta}{(\beta + |c|^2)^2} + \left\lvert z - \frac{c}{\beta + |c|^2} \right\rvert^2\right)^2}.
\end{equation}

\end{lemma}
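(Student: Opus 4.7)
The plan is to perform a direct change of variables. Writing $W = \overline{1/Z}$, the inverse map is $Z = 1/\overline{W}$, and I would first compute the real Jacobian of the map $w \mapsto 1/\overline{w}$ on $\mathbb{C}^\ast$. Setting $w = u + iv$ and computing the four partials of $1/\overline{w} = w/|w|^2$, the Jacobian determinant evaluates (up to sign) to $|w|^{-4}$, so the density of $W$ is
\begin{equation}
f_W(w) \;=\; f_Z\!\left(\frac{1}{\overline{w}}\right) \cdot \frac{1}{|w|^4}.
\end{equation}

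Next, I would simplify the argument $|1/\overline{w} - c|^2 = |1 - c\overline{w}|^2/|w|^2$, plug it into the given Student density, and multiply by $|w|^{-4}$. The $|w|^4$ factors from $(\beta + |1/\overline{w} - c|^2)^2$ will cancel the Jacobian factor cleanly, leaving
\begin{equation}
f_W(w) \;=\; \frac{1}{\pi}\,\frac{\beta}{\bigl(\beta|w|^2 + |1 - c\overline{w}|^2\bigr)^2}.
\end{equation}

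The key algebraic step is then to recognize the denominator as a translated Student form. Expanding $|1 - c\overline{w}|^2 = 1 - c\overline{w} - \overline{c}w + |c|^2|w|^2$ and completing the square in $w$ against the coefficient $A := \beta + |c|^2$ of $|w|^2$, one finds
\begin{equation}
\beta|w|^2 + |1 - c\overline{w}|^2 \;=\; A\left|w - \frac{c}{A}\right|^2 + \frac{\beta}{A},
\end{equation}
so setting $c' := c/(\beta + |c|^2)$ and factoring $A^2$ out of the bracket gives exactly $\beta' := \beta/(\beta + |c|^2)^2$ as the new scale parameter, yielding the stated density.

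The calculation is essentially mechanical; the only subtlety I expect is keeping track of the conjugations (since $W = \overline{1/Z}$ rather than $1/Z$) so that the map is antiholomorphic and the Jacobian sign is handled correctly, and the bookkeeping in the completion of the square to identify $\beta'$ and $c'$ in the precise form stated by the lemma. Once the algebraic identity for $\beta|w|^2 + |1 - c\overline{w}|^2$ is in hand, the result is immediate.
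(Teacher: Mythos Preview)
Your proposal is correct and follows essentially the same route as the paper: a change of variables with Jacobian $|w|^{-4}$, cancellation against the $|w|^4$ coming from the denominator, and then completion of the square to identify the new parameters $\beta'=\beta/(\beta+|c|^2)^2$ and $c'=c/(\beta+|c|^2)$. The only cosmetic difference is that the paper carries out the Jacobian via the complex form $\tfrac{i}{2}\,d\omega\wedge d\overline{\omega}$ (substituting $u=1/\omega$, $\overline{u}=1/\overline{\omega}$) rather than real partials, and it computes the density of $1/Z$ first before passing to the conjugate; your version handles $W=\overline{1/Z}$ directly and is arguably tidier about the conjugation bookkeeping.
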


\begin{proof}
    Let $f$ a complex-valued regular function. By recalling that $dx \wedge dy = \frac{id\omega \wedge \overline{d\omega}}{2}$, we can write

\begin{align}
    \E\left(f\left(\frac{1}{Z}\right)\right) =& \int_{\omega, \overline{\omega}} f\left(\frac{1}{\omega}\right) \frac{1}{\pi} \frac{\beta}{(\beta + |\omega - c|^2)^{2}} \frac{id\omega \wedge \overline{d\omega}}{2}.
\end{align}
We make the changes of variables $u = \frac{1}{\omega}$ and $\overline{u} = \frac{1}{\overline{\omega}}$:

\begin{align}
    \E\left(f\left(\frac{1}{Z}\right)\right) =& \int_{u, \overline{u}} f(u) \frac{1}{\pi} \frac{\beta}{\left(\beta + \left|\frac{1}{u} - c\right|^2 \right)^{2}} \frac{1}{u^2 \overline{u}^2}\frac{idu \wedge \overline{du}}{2}\\
    =&\int_{u, \overline{u}} f(u) \frac{1}{\pi} \frac{\beta}{\left(\beta |u|^2 + \left|1 - c u\right|^2 \right)^{2}} \frac{idu \wedge \overline{du}}{2}.
\end{align}
Thus, the result is obtained by writing the denominator in a more canonical form.
\end{proof}

\newpage

\end{document}